\documentclass[journal, onecolumn, 11pt]{IEEEtran}
\usepackage{amsmath}
\usepackage{amssymb}
\usepackage{amsfonts}
\usepackage{graphicx}
\usepackage{epsfig}
\usepackage{subfigure}
\usepackage{psfrag}
\usepackage{algorithmic}
\usepackage{algorithm}

\usepackage{xspace}
\usepackage{mathrsfs}
\usepackage{amsopn}


%
%
%
%
%
%
%
%
%
%
%
%


\newcommand{\Rc}{\mathcal{R}}

\newcommand{\Wc}{\mathcal{W}}

\newcommand{\st}{{\tilde{s}}}

\newcommand{\wt}{{\tilde{w}}}


\DeclareMathOperator\E{E}







\def\textiid{i.i.d.\@\xspace}
\newcommand\iid{\ifmmode\text{ i.i.d. } \else \textiid \fi}




\newtheorem{theorem}{Theorem}
\newtheorem{proposition}{Proposition}
\newtheorem{corollary}{Corollary}
\newtheorem{remark}{Remark}[section]
\newtheorem{claim}{Claim}
\title{Energy Cooperation in Cellular Networks with Renewable Powered Base Stations}

\author{Yeow-Khiang Chia\IEEEauthorrefmark{1}, Sumei Sun\IEEEauthorrefmark{2} and Rui Zhang\IEEEauthorrefmark{3}
\thanks{Paper presented in part at IEEE Wireless Communications and Networking Conference 2013, Shanghai, China} \thanks{\IEEEauthorrefmark{1} Institute for Infocomm Research, Singapore. Email: chiayk@i2r.a-star.edu.sg}  \thanks{\IEEEauthorrefmark{2}  Institute for Infocomm Research, Singapore. Email: sunsm@i2r.a-star.edu.sg}
\thanks{\IEEEauthorrefmark{3} Institute for Infocomm Research, Singapore and National University of Singapore. Email: elezhang@nus.edu.sg}%
}

\begin{document}

\maketitle \thispagestyle{empty}

\begin{abstract}
In this paper, we propose a model for energy cooperation between cellular base stations (BSs) with individual hybrid power supplies (including both the conventional grid and renewable energy sources), limited energy storages, and connected by resistive power lines for energy sharing. When the renewable energy profile and energy demand profile at all BSs are deterministic or known ahead of time, we show that the optimal energy cooperation policy for the BSs can be found by solving a linear program. We show the benefits of energy cooperation in this regime. When the renewable energy and demand profiles are stochastic and only causally known at the BSs, we propose an online energy cooperation algorithm and show the optimality properties of this algorithm under certain conditions. Furthermore, the energy-saving performances of the developed offline and online algorithms are compared by simulations, and the effect of the availability of energy state information (ESI) on the performance gains of the BSs' energy cooperation is investigated. Finally, we propose a hybrid algorithm that can incorporate offline information about the energy profiles, but operates in an online manner.
\end{abstract}
\begin{keywords}
Energy cooperation, energy harvesting, hybrid power supply, cellular networks, stochastic optimization.
\end{keywords}

\section{Introduction}
In recent years, motivated by environmental concerns and energy cost saving considerations, telecommunication service providers have started considering the deployment of renewable energy sources, such as solar panels and wind turbines, to supplement conventional power in powering base stations (BSs). In some places where the conventional power grid is still under-developed, the deployment of renewable energy sources is more attractive due to the significantly higher costs, as compared to a developed city, in powering BSs using conventional power sources. Examples where such a scenario occurs include the deployment of BSs with renewable energy sources by Ericsson in Africa~\cite{Ericsson} and Huawei in Bangladesh~\cite{Huawei}.

Although renewable energy sources are attractive for the above reasons, they also suffer from significantly higher variability as compared to conventional energy sources. As a result, even in BSs that deploy renewable energy sources, conventional energy sources, such as diesel generators or the power grid, is still required to compensate for the variability of the renewable energy sources. One practical method of mitigating the variability of renewable energy sources is through energy storage means such as fuel cells and batteries. Energy storage, however, is very costly to deploy and therefore, the amount of storage available at BSs will usually be quite limited. A key consideration in deploying BSs with renewable energy sources is minimization of the amount of conventional energy consumed, because it is only then cost-effective to deploy renewable energy sources and storage. A survey of issues involved in deploying renewable energy sources in BSs is given in~\cite{Hasan2011}. Related work on deploying renewable energy sources in smart grids, not necessarily constrained to a communications systems setup can be found in~\cite{Kanoria2011},~\cite{Su2011} and the references therein. 

Other than work in smart grid, a closely related area of research is in the area of energy harvesting for wireless communications, where several authors have proposed the idea of energy cooperation between different nodes in a communications network; see e.g.~\cite{Gurakan--Ozel--Yang--Ulukus2012a}, and~\cite{CHuang2012}. More broadly, the area of ``green communications'' has attracted significant attention from the communications community in recent years. For an overview of the many significant research activities in this area, interested readers may refer to, e.g.,~\cite{Zhang--Gladisch--Pickavet--Tao--Mohr2010,Li--Xu--Swami--Himayat--Fettweis2011,han--leung--niu--ristaniemi--seet2011,Capone--Kilper--Niu2012,Hossain--Bhargava--Fettweis2012} and the references therein for various issues in energy efficiency and management in communication systems. 

In this paper, we consider mitigating the variability of renewable energy sources through geographical diversity. We consider the case when two or more BSs are connected by power lines so as to allow for transfer of energy between each other. A transfer of energy between two BSs allow for one that has excess of energy to compensate the other that has a deficit due to either higher demand of the users connected to the BS, or lower generation of renewable energy. We analyze the reduction in conventional energy needed to power the BSs if they are allowed to transfer energy, even when there is storage inefficiency and resistive power loss. We consider the availability of different information about the renewable energy sources and demand for our setting, and propose algorithms that take advantage of the energy cooperation between BSs and the information available to minimize their energy consumption from conventional sources. 

Another motivation for considering energy transfer comes from the possibility of using the power line as a backhaul link to enable coordinated multipoint transmission (CoMP) for cellular BSs~\cite{Gesbert2010}. This results in an attractive dual use of the power line for both energy cooperation and communication cooperation. 
 
The rest of this paper is organized as follow. In Section~\ref{sect:2}, we give formal definitions and description of our proposed energy cooperation model. In Section~\ref{sect:3}, we study the optimal offline energy cooperation policy for the case of \textit{deterministic} renewable energy and demand profiles in which the future renewable energy and demand are known in advance. This setup, which has also been considered in energy harvesting based wireless communications~\cite{Ozel2011},~\cite{Ho2012}, models the scenario where we have good approximations of the renewable energy and demand profiles for the duration of interest and are willing to ignore small prediction errors. In Section~\ref{sect:4}, we consider the general case of \textit{arbitrary} renewable energy and demand profiles, and propose an online energy cooperation policy for this case. We analyze the optimality properties of this online policy under certain conditions, and compare its performance with the lower bound obtained by the offline policy via simulation. In Section~\ref{sect:hyb}, we propose an online hybrid algorithm that incorporates some offline information about the energy profile, and compare the performance of this hybrid algorithm to the online algorithm.  Finally, in Section~\ref{sect:5}, we conclude the paper and discuss some possible extensions for future work.  

\section{System Model} \label{sect:2}
In this paper, our focus will be on the case of two base stations, namely BS 1 and BS 2, with individual renewable energy generators, conventional energy sources, energy storage devices and connected with a power line. Our model, as depicted in Fig.~\ref{fig1}, can be easily generalized to multiple (more than two) BSs, but we consider only the case of two BSs in this paper for simplicity. 
\begin{figure}
\begin{center}
\psfrag{1}[c]{ }
\psfrag{2}[c]{ }
\psfrag{s}[c]{Storage}
\psfrag{bs1}[c]{BS 1}
\psfrag{bs2}[c]{BS 2}
\psfrag{b}[c]{$\beta$}
\psfrag{a}[c]{$\alpha$}
\psfrag{re1}[c]{$RE_1$}
\psfrag{re2}[c]{$RE_2$}
\psfrag{d1}[c]{$DE_1$}
\psfrag{d2}[c]{$DE_2$}
\psfrag{w2}[c]{$W_2$}
\psfrag{w1}[c]{$W_1$}
\scalebox{0.7}{\includegraphics{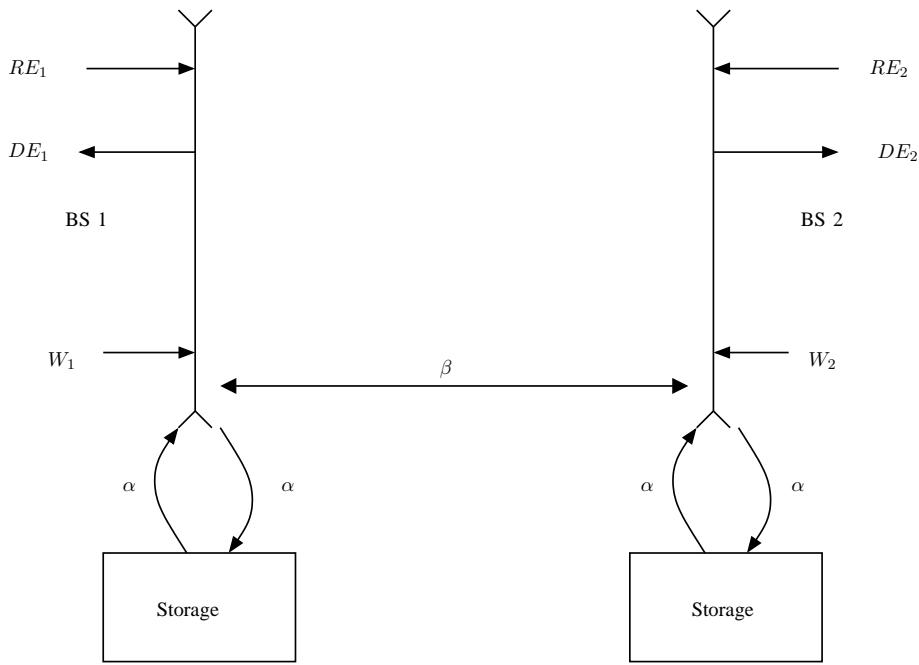}}
\caption{System setup} \label{fig1}
\end{center}
\vspace{-20pt}
\end{figure}

We consider a finite-horizon time-slotted system with slot index $t$, $1 \le t \le N$, and $N$ denoting the total number of slots under investigation. In the following, we define the elements of our energy cooperation model with two BSs, i.e. BS 1 and BS 2. We will use $i \in \{1,2\}$ to denote an element at the corresponding base station.
\subsection{Model Elements}
\textbf{Renewable energy generated at BS $i$ and time $t$}: $RE_i(t) \ge 0$.

\textbf{Demand at BS $i$ and time $t$}: $DE_i(t)\ge 0$. 

\textbf{Net energy generated at BS $i$ and time $t$}: $E_i(t) = RE_i(t) - DE_i(t)$. This quantity can be positive, representing a surplus, or negative, representing a deficit.

\textbf{Energy stored in BS $i$ at time $t$}: $s_i(t)\ge 0$. To model limited storage constraint, we further assume $s_i(t) \le S_{\rm max}$.

\textbf{Energy charged/discharged to/from storage at BS $i$ and time $t$}: $c_i(t) \ge 0$ /$d_i(t)\ge 0, d_i(t) \le s_i(t)$. Intuitively, for given BS $i$ and time $t$, there is at most one of $c_i(t)$ and $d_i(t)$ that is strictly positive, i.e. $c_i(t)\cdot d_i(t) = 0$.

\textbf{Energy transfer from BS 1 (or 2) to BS 2 (or 1)}: $x_{12}(t) \ge 0$ (or $x_{21} \ge 0$). For a given time $t$, there is at most one of $x_{12}(t)$ and $x_{21}(t)$ that is strictly positive, i.e. $x_{12}(t)\cdot x_{21}(t) = 0$.

\textbf{Energy drawn from conventional energy source at BS $i$ and time $t$}: $w_i(t) \ge 0$.
 
\subsection{System Dynamics}

We require the following equations for \textbf{storage dynamics} to be satisfied: $s_i(t+1) = s_i(t) + \alpha c_i(t) - d_i(t)$. Here, $0 \le \alpha \le 1$ represents storage inefficiency, i.e. the energy lost in storage. As discussed earlier, we also require $0 \le s_i(t) \le S_{\rm max}$ for all $t$. The combined storage dynamics and constraint leads to the constraint: $-s_{i}(t)\le \alpha c_i(t) -d_i(t) \le S_{\rm max} - s_i(t)$. We also assume that $s_1(1) = s_2(1) = 0$. That is, there is no energy in storage at the initial time\footnote{This assumption is made for simplicity of exposition and can be generalized to arbitrary storage values.}. Furthermore, the following two inequalities need to be satisfied at BS 1 and BS 2, respectively, in order to maintain their \textbf{energy neutralization} at each time $t$:
\begin{align}
& E_1(t) + w_1(t) - c_1(t) + \alpha d_1(t) - x_{12}(t) + \beta x_{21}(t) \ge 0, \label{eq1} \\ 
& E_2(t) + w_2(t) - c_2(t) + \alpha d_2(t) - x_{21}(t) + \beta x_{12}(t) \ge 0. \label{eq2}
\end{align}
Here, $\alpha$ again represents storage inefficiency and captures in this case, the inefficiency in drawing energy from storage, while $0\le \beta \le 1$ represents \textit{resistive loss} in transferring energy from one BS to another.~\eqref{eq1} captures the constraint that any demand at time $t$ at BS 1 has to be satisfied, by perhaps a combination of discharge from storage, transfer from BS 2, and conventional energy, or renewable energy. Similarly,~\eqref{eq2} captures the energy balance requirement for BS 2. 
\subsection{Control Policy and Objective Function}

In general, $E_1(t)$ and $E_2(t)$ can be modeled by a jointly distributed continuous stochastic process with a joint distribution $F$. Using vector notation, for any scalars $y_1(t), y_2(t), \ldots, y_n(t)$, we let $y(t) = [y_1(t), y_2(t), \ldots, y_n(t)]^T$. Hence, we let $s(t) = [s_1(t), s_2(t)]^T$ represent the state of our system at time $t$. Similarly, our \textit{control variables} at time $t$ are the tuples $(w(t), c(t), d(t), x_{12}(t), x_{21}(t))$, where $w(t) = [w_1(t), w_2(t)]^T, c(t) = [c_1(t), c_2(t)]^T, d(t) = [d_1(t), d_2(t)]^T$. In general, these control variables at time $t$ are functions of the past history, $\{(E(k)), 1\le k\le t\}$, with $E(k) = [E_1(k), E_2(k)]^T$, and the joint distribution $F$. A control policy $\pi$ is then a sequence of these control variables\footnote{The use of the symbol $\pi$ to represent a control policy is standard in the control/dynamic programming literature. With an abuse of notation, we will also be using the symbol $\pi$ to represent the number 3.14159... in our numerical simulations. It will be clear from context whether we are using the symbol $\pi$ for a control policy or the number.}.
That is, $\pi = \{(w(t), c(t), d(t), x_{12}(t), x_{21}(t)), 1\le t\le N\}$.

Next, the objective of our setup is to minimize the expected average conventional energy consumed. That is, we seek a control policy $\pi^*$ that minimizes
\begin{align*}
\E\left( \frac{1}{N} \sum_{t=1}^N \left(w_{1}(t) + w_2(t)\right) \right),
\end{align*}
where the expectation is taken with respect to the joint distribution $F$, and under the control policy $\pi^*$.

\begin{remark}
Another valid cost criteria is to let $N \to \infty$ and minimize the long-run expected average conventional energy cost. That is, we wish to minimize
\begin{align*}
\limsup_{N\to \infty}\E\left( \frac{1}{N} \sum_{t=1}^N \left(w_{1}(t) + w_2(t)\right) \right).
\end{align*}
This criterion has the advantage of being insensitive to the starting state, but intuition about our model can be more easily obtained when $N$ is finite. In this paper, we will restrict our attention to finite $N$ for simplicity.
\end{remark}  

The optimal control policy for our model, as currently formulated, is open. In the rest of this paper, we will consider a number of special cases in which we can obtain some useful insight on this problem.

\section{Offline Algorithm with Deterministic Energy Profile} \label{sect:3}
The first restriction that we make to this model is to consider a deterministic energy profile, with the net energy profile $E_1(t)$ and $E_2(t)$ being known to both BSs for all $t$. In this case, our model reduces to the following linear program.

\begin{theorem} \label{thm:1}
When the net energy profiles $E_1(t)$ and $E_2(t)$ are deterministic and known to BS 1 and BS 2 for all $t$, the optimal control policy, $\pi^*$, is found by solving the following linear program.{\allowdisplaybreaks
\begin{align}
& \underset{\pi}{\min} \sum_{t=1}^N (w_1(t) + w_2(t))\nonumber\\
& \text{subject to (for $1 \le t \le N$)} \nonumber\\
& s(t+1) = s(t) + \alpha c(t) - d(t), \label{eqc1}\\
& E_1(t) + w_1(t) - c_1(t) + \alpha d_1(t) - x_{12}(t)  + \beta x_{21}(t) \ge 0, \label{eqc2} \\ 
& E_2(t) + w_2(t) - c_2(t) + \alpha d_2(t) - x_{21}(t)  + \beta x_{12}(t) \ge 0, \label{eqc3}\\
& [0, 0]^T \le s(t) \le [S_{\rm max}, S_{\rm max}]^T, \label{eqc4}\\
& d(t) \le s(t), \label{eqc5}\\
& s(1) =0, c(t), d(t) \ge 0, x_{12}(t), x_{21}(t) \ge 0. \label{eqc6}
\end{align}}
\end{theorem}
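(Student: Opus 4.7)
The plan is to show that the LP above captures the original deterministic optimization problem exactly. Two things differ from the formulation in Section~\ref{sect:2}: (i) the expectation in the objective becomes a plain sum because $E_1(t), E_2(t)$ are deterministic and known, and (ii) the ``intuitive'' complementarity conditions $c_i(t)d_i(t)=0$ and $x_{12}(t)x_{21}(t)=0$ do not appear among \eqref{eqc1}--\eqref{eqc6}. All remaining constraints transcribe directly: \eqref{eqc1} is the storage dynamics, \eqref{eqc2}--\eqref{eqc3} are the energy neutralization conditions \eqref{eq1}--\eqref{eq2}, \eqref{eqc4} is the storage capacity bound, \eqref{eqc5} enforces $d_i(t)\le s_i(t)$, and \eqref{eqc6} is the initial condition and non-negativity.

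Step (i) is immediate: with deterministic profiles the expectation vanishes, and the constant factor $1/N$ does not affect the minimizer, so minimizing $\sum_{t=1}^N(w_1(t)+w_2(t))$ is equivalent to the original objective.

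The substantive step is (ii): I would argue that any optimum of the LP already satisfies the complementarity conditions, so the LP relaxation is tight. For the storage variables, suppose $c_i(t), d_i(t) > 0$ simultaneously at some feasible point. If $\alpha c_i(t)\ge d_i(t)$, replace $(c_i(t), d_i(t))$ by $(c_i(t)-d_i(t)/\alpha,\, 0)$; otherwise replace it by $(0,\, d_i(t)-\alpha c_i(t))$. In either case the net storage contribution $\alpha c_i(t)-d_i(t)$ is preserved, so \eqref{eqc1}, \eqref{eqc4}, and \eqref{eqc5} still hold. The LHS of \eqref{eqc2} or \eqref{eqc3} changes by $d_i(t)(1/\alpha-\alpha)\ge 0$ or $c_i(t)(1-\alpha^2)\ge 0$ respectively, so it only becomes slacker; $w_i(t)$ can then be kept the same (or reduced), and the objective does not increase. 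A symmetric construction handles the transfer complementarity: if $x_{12}(t), x_{21}(t) > 0$ with, without loss of generality, $x_{12}(t) \ge x_{21}(t)$, replace $(x_{12}(t), x_{21}(t))$ by $(x_{12}(t)-x_{21}(t),\, 0)$; both \eqref{eqc2} and \eqref{eqc3} then become slacker by $x_{21}(t)(1-\beta)\ge 0$, again without worsening the objective.

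The main obstacle is purely bookkeeping: one must verify that each of the above replacements simultaneously preserves the storage dynamics, the storage bounds, and both energy neutralization inequalities, while not increasing any $w_i(t)$. Once this is done, the LP relaxation is tight and its optimum coincides with the optimum of the original problem, so solving the LP yields the optimal offline control policy $\pi^*$.
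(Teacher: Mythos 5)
Your proposal is correct and follows the same route as the paper: transcribe the deterministic problem as an LP and argue that dropping the complementarity conditions $c_i(t)d_i(t)=0$ and $x_{12}(t)x_{21}(t)=0$ loses no optimality. The paper merely asserts that last claim, whereas your explicit substitutions (netting out simultaneous charge/discharge and simultaneous bidirectional transfer, checking that \eqref{eqc1}--\eqref{eqc6} are preserved and the neutralization constraints only slacken by $d_i(1/\alpha-\alpha)$, $c_i(1-\alpha^2)$, or $x_{21}(1-\beta)$) supply exactly the missing detail, so your write-up is a strictly more complete version of the paper's argument.
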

\begin{proof}
The reduction to the linear program follows from the assumption that the energy profiles are known for all $t$. In this case, the objective function simply reduces to the sum of the conventional energy required at each time $t$. Note that in the above problem, we do not explicitly put the constraints $c_i(t)d_i(t) = 0$, $i = 1,2$, and $x_{12}(t)x_{21}(t) = 0$ for any given $t$. However, it can be shown that the optimal solution of this problem always satisfies these constraints, and thus there is no loss of optimality in removing such constraints. 
\end{proof}

It is easy to see that there can be several solutions achieving the same objective value in the linear program formulation in Theorem~\ref{thm:1}. In addition to minimizing the energy drawn from the grid, a secondary objective could be to maximize the sum of the energy stored in the BSs' storages at time $N+1$, so that the stored energy could be used in future time blocks to reduce the energy drawn from the grid. In this case, we can add another optimization step to maximize the energy stored in the base stations at time $N+1$, subject to the constraint that the minimum amount of energy is drawn from the main grid. This is shown in the following algorithm.

\begin{algorithm}[H]
\caption{Offline storage maximization with minimum conventional energy consumption}
\label{algo1}
\algsetup{indent=1.5em,
linenosize=\footnotesize}
\begin{algorithmic}[1]
\STATE {Input: $E_1(t)$ and $E_2(t)$ for $1\le t \le N$}
\STATE {Solve Linear program in Theorem~\ref{thm:1} and output $V_1$, the optimal value of the linear program.}
\STATE Solve the following linear program
\begin{align}
& \underset{\pi}{\max}\; s_1(N+1) + s_2(N+1) \nonumber\\ 
& \text{subject to (for $1 \le t \le N$)} \nonumber\\
& \sum_{t=1}^N (w_1(t) + w_2(t)) \le V_1, \nonumber \\
& \mbox{Equations } \eqref{eqc1} \mbox{ to } \eqref{eqc6} \nonumber
\end{align}
\STATE {Output: $\pi^*$, the optimal policy that minimizes energy consumption and maximizes storage at time $N+1$}
\end{algorithmic}
\end{algorithm}

The assumption of deterministic energy profile models the case when the demands and renewable energy generation can be well approximated for $1\le t \le N$; i.e., the case when the error in predicting the demand and renewable energy generated is small. Furthermore, it also allows us to gain insight into situations where it is beneficial for BSs to cooperate with each other. Intuitively, energy cooperation is helpful whenever the net energy generated at the two BSs are sufficiently uncorrelated or anti-correlated, as will be shown next.

To demonstrate the benefits of energy cooperation for two BSs, we model $E_1(t)$ and $E_2(t)$ with the following energy profiles.
\begin{align}
E_1(t) &= A\sin(\omega t), \label{en1}\\
E_2(t) &= A\sin(\omega t + \theta).\label{en2}
\end{align}
Here, the correlation between the net energy profiles at BSs 1 and 2 is measured by the phase shift $\theta$. This approach of modeling correlation has been used in related context, such as in work on communications with energy harvesting devices~\cite{CHuang2012}.

\textit{Energy saving versus storage for different $\theta$}: We now show some simulation results on the energy saving versus storage for different values of $\theta$. We set the following values: $\omega = 2\pi/24$, $A = 3$, $0\le t \le 239$, $\theta \in \{\pi/4, \pi/2, 3\pi/4, \pi\}$, $\beta = 0.8$ and $\alpha =0.9$. The results are plotted in Fig.~\ref{fig2}.
\begin{figure}
\begin{center}
\scalebox{0.6}{\includegraphics{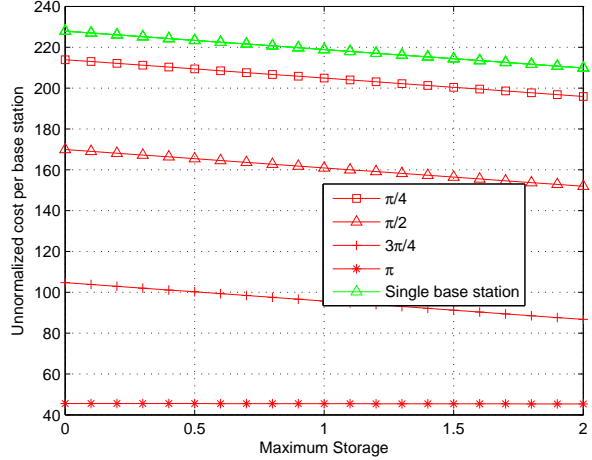}} 
\caption{Conventional energy consumed versus storage for different $\theta$} \label{fig2}
\end{center}
\vspace{-20pt}
\end{figure}
We compare the average unnormalized cost $\sum_{t=0}^{239} (w_1(t) + w_2(t))/2$ against that of a single BS having the energy profile in~\eqref{en1} (plotted in green in the figure).

As we can see from the figure, BSs' energy cooperation helps in general as the average cost per BS for the two cooperating BSs is lower than that of a single BS. As $\theta$ varies from $\pi/4$ to $\pi$, the cost per BS decreases as the energy profiles of the two BSs become more anti-correlated.

As storage increases, it is also clear that the cost decreases, since more of the excess energy generated can be stored for later use, when there is a deficit. This storage benefit, however, decreases with $\theta$ increasing to $\pi$. Increasing $\theta$ to $\pi$ signifies an increase in geographical diversity, resulting in the ability to compensate for deficit at one BS with excess from the other BS. When $\theta = \pi$, there is little benefit from increasing storage. 

\textit{Energy savings versus $\theta$ for fixed storage}: To show the effect of $\theta$ more clearly, we now keep the storage fixed at $S_{\rm max} =1$ and vary $\theta$ from $0$ to $2\pi$. The rest of the parameters are kept fixed. In Fig.~\ref{fig3}, we plot the percentage cost savings, relative to the energy cost of a single BS with the energy profile of~\eqref{en1}, against different values of $\theta$.
\begin{figure}
\begin{center}
\includegraphics[width = 0.525\textwidth]{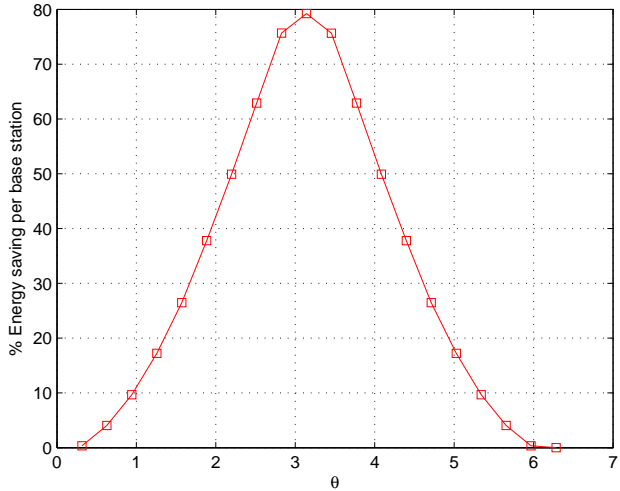}
\caption{Percentage energy saving versus $\theta$ for $S_{\rm max} =1$} \label{fig3}
\vspace{-20pt}
\end{center}
\end{figure}
As we can see from the figure, the saved cost increases as $\theta$ varies from $0$ to $\pi$, at which point the energy profile of BS 2 is anti-correlated with BS 1. This allows effective compensation through energy transfer between the two BSs. As $\theta$ varies from $\pi$ to $2\pi$, the energy profile becomes highly correlated again, resulting in fewer opportunities to perform energy transfer between the two BSs.

\section{Online Algorithm with Stochastic Energy Profile} \label{sect:4}
We now consider the more practical case when the net energy at both BSs are stochastic and not known ahead of time. We propose an online energy cooperation algorithm based on a greedy heuristic for minimizing conventional energy usage in Section~\ref{subsect:3a}. We then analyze some properties of this algorithm in Section~\ref{subsect:3b}. In particular, we state some optimality properties under specific energy profiles. Finally, in Section~\ref{subsect:3c}, we provide simulation results on the performance comparison between the online algorithm versus the optimal offline algorithm proposed in Section~\ref{sect:3}. 

To describe the algorithm, we first assume $\alpha >0$ and $\beta >0$ to avoid the complications of dealing with the case of no storage ($\alpha =0$) or no cooperation between BSs ($\beta =0$).

\subsection{Greedy Online Algorithm} \label{subsect:3a}
Our greedy algorithm follows from considering a single snapshot of the linear programs given in the previous section with arbitrary storage states. That is, with $N = 1$, but with the additional condition that the initial storage states need not be equal to zero. We now present our algorithm as follow.

\textit{Greedy Algorithm step 1: Greedy minimization of energy drawn from the main grid. }
Assume that the initial storage values are $s_1$ for BS 1 and $s_2$ for BS 2. Then, we solve the following linear program (for notational simplicity, we suppress the dependence on time){\allowdisplaybreaks
\begin{align}
& {\min}\; (w_1 + w_2)\nonumber \\
& \text{subject to} \nonumber\\
& [0, 0]^T \le [s_1, s_2]^T + \alpha c - d \le [S_{\rm max}, S_{\rm max}]^T, \label{eqg1}\\
& E_1 + w_1 - c_1 + \alpha d_1 - x_{12} + \beta x_{21} \ge 0,  \\ 
& E_2 + w_2 - c_2 + \alpha d_2 - x_{21} + \beta x_{12} \ge 0, \\
& d \le [s_1, s_2]^T, \\
& c\ge 0, d \ge 0,  x_{12}\ge 0, x_{21} \ge 0. \label{eqg5}
\end{align}}

\textit{Greedy Algorithm step 2: Storage maximization. } Let $V_1$ be the optimal value of the linear program in step 1. Then, we solve the following linear program.{\allowdisplaybreaks
\begin{align*}
& {\max}\; [s_1, s_2]^T + \alpha c - d\\
& \text{subject to} \\
& w_1 + w_2 \le V_1, \\
& \mbox{Equations } \eqref{eqg1} \mbox{ to } \eqref{eqg5}.
\end{align*}}

In the case of a single snapshot, instead of solving two linear programs individually, we can combine the linear programs, as stated in the next proposition.
\begin{proposition} \label{prop:greed_lin}
Let $\mathbf{1}$ be the vector of all ones. For any $\gamma$ with $0< \gamma < \alpha\beta$, the greedy algorithm is obtained by solving the following linear program. {\allowdisplaybreaks
\begin{align*}
& \min\; (w_1 + w_2) - \gamma \mathbf{1}^T([s_1, s_2]^T+ \alpha c -d)\\
& \text{subject to} \\
& \mbox{Equations } \eqref{eqg1} \mbox{ to } \eqref{eqg5}.
\end{align*}}
\end{proposition}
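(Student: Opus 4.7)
The plan is to show that any optimizer $\pi^*$ of the combined linear program coincides in both objective components with the two-step greedy algorithm, so that solving the combined LP for any $\gamma \in (0, \alpha\beta)$ recovers the greedy policy. Throughout, let $V^* = w_1^* + w_2^*$ and $S^* = \mathbf{1}^T([s_1, s_2]^T + \alpha c^* - d^*)$ denote the values achieved by $\pi^*$, and let $V_1$ and $S^{**}$ be the optimal values of step 1 and step 2 of the greedy algorithm, respectively. The greedy output $\pi_2$ is feasible for the combined LP, and $\pi^*$ is feasible for step 1 of the greedy algorithm.

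The first step is essentially bookkeeping. From $\pi_2$'s feasibility for the combined LP one has $V^* - \gamma S^* \le V_1 - \gamma S^{**}$, while $V^* \ge V_1$ since $V_1$ is the minimum of $w_1 + w_2$ over all feasible policies. Writing $\Delta_V := V^* - V_1 \ge 0$, these rearrange to the lower bound
\begin{equation*}
S^* - S^{**} \ge \Delta_V/\gamma.
\end{equation*}

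The crux of the argument, and the main obstacle, is a matching upper bound $S^* - S^{**} \le \Delta_V/(\alpha\beta)$. The plan is to exhibit, starting from $\pi^*$, a feasible modification $\pi'$ with $w_1' + w_2' = V_1$ whose end-of-period storage decreases by at most $\Delta_V/(\alpha\beta)$; since such a $\pi'$ is then feasible for step 2, this yields $S^{**} \ge S^* - \Delta_V/(\alpha\beta)$. The construction hinges on the worst-case exchange rate between grid energy and storage: to replace $\delta$ units of $w_i$ at BS $i$ by transferred-and-discharged energy from the opposite BS $j$, one increases $x_{ji}$ by $\delta/\beta$ and $d_j$ by $\delta/(\alpha\beta)$, which preserves the neutralization constraints \eqref{eqg1}--\eqref{eqg5} and reduces the total storage by exactly $\delta/(\alpha\beta)$. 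Alternative substitutions (direct discharge at BS $i$, or canceling previously planned charging by decreasing $c_i$) cost only $\delta/\alpha$ or $\alpha\delta$ in storage respectively, both bounded above by $\delta/(\alpha\beta)$ since $\alpha, \beta \in (0,1]$. Since $V_1$ is achievable in the first place, a sufficient chain of such substitutions exists; the technical care needed is in verifying that substitutions remain feasible (enough stored energy at BS $j$, non-negativity of decision variables), which can be handled by arguing that the auxiliary LP minimizing storage loss subject to the constraint $w_1 + w_2 \le V_1$ has optimal value at most $\Delta_V/(\alpha\beta)$.

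Finally, combining the two bounds forces $\Delta_V/\gamma \le \Delta_V/(\alpha\beta)$. Since $\gamma < \alpha\beta$, this inequality implies $\Delta_V = 0$, hence $V^* = V_1$. Consequently $\pi^*$ satisfies $w_1^* + w_2^* \le V_1$ and is thus feasible for step 2, giving $S^* \le S^{**}$; combined with the lower bound $S^* - S^{**} \ge \Delta_V/\gamma = 0$, this yields $S^* = S^{**}$. Therefore $\pi^*$ simultaneously achieves the minimum grid energy $V_1$ and the maximum storage $S^{**}$ of the greedy two-step algorithm, as claimed.
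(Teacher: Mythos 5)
Your proof is correct and follows essentially the same route as the paper's: both arguments hinge on the worst-case exchange rate of $1/(\alpha\beta)$ between excess grid energy and storage gain, combined with $\gamma<\alpha\beta$ to force $\Delta_V=0$. The only difference is presentational — you sandwich $S^*-S^{**}$ between $\Delta_V/\gamma$ and $\Delta_V/(\alpha\beta)$ directly, whereas the paper splits into cases on the signs of $w_i^*-w_i'$ before reaching the same contradiction, and both treatments leave the exchange-rate bound at the same level of informal justification.
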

We defer the proof of this Proposition to Appendix~\ref{appen:1a}.

The greedy algorithm requires solving a small linear program. For the case of two BSs, however, it can be further simplified to the following equivalent algorithm in Proposition~\ref{prop:algo} by considering the actions the BSs would take at each time $t$. 

\begin{proposition} \label{prop:algo}
The greedy online linear program is equivalent to the following algorithm, which should be understood to be implemented for each time $1 \le t \le N$, and we again suppress the dependence on $t$ for convenience. Unless otherwise stated, we set all of $w, c, d, x_{12}$ and $x_{21}$ equal to zeros in the algorithm. For each time $t$, if

\textit{Case 1}: $E_1\ge 0$ and $E_2 \ge 0$. For $i \in \{1,2\}$, we first carry out the following
\begin{align*}
c_{i} &= \min\{(S_{\rm max} - s_i)/\alpha, E_i\}, \\
s_i &\leftarrow s_i + \alpha c_i.
\end{align*}
If both $s_1 = s_2 = S_{\rm max}$ or $s_1, s_2 < S_{\rm max}$, this case terminates. Otherwise, if $s_2 < S_{\rm max}$ and $s_1 = S_{\rm max}$, BS 1 transfers energy to BS 2 for storage. That is, we set
\begin{align*}
x_{12} &= E_1 - c_1, \\
c_2' &= \min\{ \beta x_{12}, (S_{\rm max} - s_2)/\alpha\}, \\
s_2 &\leftarrow s_2 + \alpha c_2',\\
c_2 &\leftarrow c_2 + c_2'.
\end{align*}

Similarly, if $s_1 < S_{\rm max}$ and $s_2 = S_{\rm max}$, the roles of BSs 1 and 2 in the above are reversed.\\

\textit{Case 2A}: $E_1 \ge 0$, $E_2 < 0$ and $\beta \ge \alpha^2$. In this case, BS 1 first transfers the net energy to BS 2 to compensate for the deficit. Hence, let 
\begin{align*}
x_{12} &= \min\{E_1, |E_2|/\beta\}, \\
E_2' &= E_2 + \beta x_{12}.
\end{align*}
Now, if $E_2' = 0$, we carry out the algorithm in Case 1 with net energy profiles $E_1' = E_1 -x_{12}$ and $E_2'$. If $E_2' <0$, we compensate for the remaining deficit via storage at BS 2 first. We set
\begin{align*}
d_2 &= \min\{|E_2'|/\alpha, s_2\}, \\
E_2'' &= E_2' + \alpha d_2.
\end{align*}
If $E_2'' = 0$, this case is completed. Otherwise, we compensate from storage at BS 1. That is, we set
\begin{align*}
d_1 &= \min\{|E_2''|/(\alpha\beta), s_1\}, \\
x_{12}  &\leftarrow x_{12} +  \alpha d_1,\\
E_2''' &= E_2'' + \alpha \beta d_{1}.
\end{align*}
Finally, if there is still a deficit remaining ($E'''_2 <0 $), we compensate through conventional energy consumption and set $w_2 = |E'''_2|$.

\textit{Case 2B}: $E_1 \ge 0$, $E_2 < 0$ and $\beta < \alpha^2$. In this case, BS 1 tries to maximize its own storage level using the excess energy, subject to minimizing energy required to compensate for the deficit at BS 2. The optimal policy is determined under the following two sub-scenarios.
\begin{itemize}
\item $|E_2| \ge \beta E_1 + \alpha s_2$: The optimal policy is the same as Case 2B.
\item $|E_2| < \beta E_1 + \alpha s_2$: The optimal policy is given as follows. Let $1_{(.)}$ be the indicator function.
\begin{align*}
x_{12} & = \max\left\{\frac{|E_2| - \alpha s_2}{\beta}, E_1 - \frac{S_{\rm max} - s_1}{\alpha}, 0\right\}, \\
c_1 &= E_1 - x_{12}, \\
d_2 & = \max\left\{\frac{|E_2| - \beta x_{12}}{\alpha}, 0\right\}, \\
c_2 & = \mathbf{1}_{d_2 = 0} \min\left\{\frac{S_{\rm max} - s_2}{\alpha}, \beta x_{12} - |E_2|\right\}, \\
s_1 &\leftarrow s_1 + \alpha c_1, \\
s_2 &\leftarrow s_2 + \alpha c_2 -d_2. 
\end{align*}
\end{itemize}

\textit{Case 3}:  $E_1 < 0$, $E_2 \ge 0$ and $\beta \ge \alpha^2$. This case is symmetric to Case 2, with the roles of BSs 1 and 2 reversed. We therefore omit the description of the algorithm here.

\textit{Case 4}:  $E_1 < 0$ and $E_2 < 0$. In this case, each BS compensates using individual storage first, before helping the other. That is, for $i\in\{1,2\}$, we let 
\begin{align*}
d_i &= \min\{s_i, |E_i|/\alpha\}, \\
E'_i &= E_i + \alpha d_i, \\
s_i &\leftarrow s_i - d_i.
\end{align*}
If either $E'_1 \ge 0$ or $E'_2 \ge 0$, the algorithm reduces to the first three cases with net energy profiles being $E'_1$ and $E'_2$. If both $E'_1<0$ and $E'_2<0$, we compensate with conventional energy generation and set $w_i = |E'_i|$. 

\end{proposition}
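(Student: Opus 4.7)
The plan is to verify, by case analysis on the signs of $E_1$ and $E_2$, that the prescription of Proposition~\ref{prop:algo} is an optimal solution to the combined one-shot linear program of Proposition~\ref{prop:greed_lin}, whose objective is $(w_1+w_2) - \gamma \mathbf{1}^T(s + \alpha c - d)$ with $0 < \gamma < \alpha\beta$. Since the LP has eight non-negative scalar variables and its feasible region is cut out only by box constraints together with the two energy-neutralization inequalities~\eqref{eqc2}--\eqref{eqc3}, for each sign configuration I can identify the optimum by a short sequence of exchange arguments, exploiting the hierarchy of effective efficiencies $1 > \alpha \ge \alpha\beta > \gamma$ implied by the assumptions on $\alpha, \beta, \gamma$.

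Cases~1 and~4 are handled by direct marginal comparison. When both BSs are in surplus, any $w_i>0$ can be eliminated at strictly lower objective value by redirecting that slack into $c_i$ or, if $s_i = S_{\rm max}$ while the peer has headroom, into $x_{ij}$: since $\gamma<1$ local storage strictly dominates paying $w_i$, and since $\gamma\beta>0$ transfer at loss $\beta$ still improves on $w_i$, giving exactly the ``fill local storage first, then transfer the residual'' prescription. When both BSs are in deficit, discharging own storage at rate $\alpha$ strictly dominates $w_i$ because $\gamma < \alpha$, while cross-BS discharging via $x_{ij}$ is dominated by local discharging because both BSs are in deficit; hence each BS empties its own storage first and pays $w_i$ for the remainder.

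The substantive work lies in Cases~2 and~3, which are symmetric; fix $E_1\ge 0$, $E_2 < 0$. There are four channels for covering the BS~2 deficit: direct transfer $x_{12}$ (at rate $\beta$ per unit $w_2$-saved), self-discharge $d_2$ (at rate $\alpha$), and conventional $w_2$ (at unit rate); the fourth effective channel is the opportunity cost of routing excess at BS~1 to $c_1$ (secondary reward $\gamma\alpha$ per unit $E_1$) versus transferring-and-storing at BS~2 (secondary reward $\gamma\alpha\beta$). Computing the effective objective cost per unit deficit covered, transfer costs $\gamma\alpha/\beta$ and self-discharge costs $\gamma/\alpha$; the comparison $\gamma\alpha/\beta$ versus $\gamma/\alpha$ reduces to $\beta$ versus $\alpha^2$, which is exactly the Case~2A/2B split. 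In Case~2A ($\beta\ge\alpha^2$) the order is ``transfer, then discharge, then conventional''; in Case~2B ($\beta<\alpha^2$) excess at BS~1 is preferentially kept in $c_1$, and $x_{12}$ is driven down to the minimum needed either to avoid $w_2$ (once $s_2$ is fully drained) or to absorb the surplus beyond BS~1's storage headroom, which is precisely the max-expression defining $x_{12}$ in the second bullet. The accompanying formulas for $c_1, d_2, c_2$ are then forced by feasibility and by the remaining marginal comparisons.

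I expect the main obstacle to be two-fold. First, the algorithm is written with sequential state updates $s_i \leftarrow s_i + \alpha c_i$ while the LP treats all variables simultaneously; I plan to handle this by an inductive argument that each sequential assignment fixes one variable at its LP-optimal value and leaves a reduced LP of the same structure, to which the same marginal-efficiency reasoning applies. Second, the first bullet of Case~2B appears to reference itself and should be read as ``same as Case~2A'': when $|E_2| \ge \beta E_1 + \alpha s_2$ the deficit necessarily triggers $w_2>0$ regardless of routing, and in that regime the $c_1$-versus-$x_{12}$ trade-off collapses (every unit of $E_1$ kept at BS~1 forces an equal unit of extra $w_2$ at cost $1 > \gamma\alpha$), so full transfer plus full discharge of $s_2$ is uniquely optimal. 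Once these points are settled, optimality within each case follows from a direct check of complementary slackness for the constraints~\eqref{eqg1}--\eqref{eqg5}.
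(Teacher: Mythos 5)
Your overall strategy is sound and close in spirit to the paper's own proof: the paper also argues case by case on the signs of $E_1,E_2$, and its Case~2 analysis (split into four magnitude regimes) derives the $\beta$ versus $\alpha^2$ threshold by maximizing the net change in stored energy subject to the minimal grid draw, which is exactly your marginal comparison $\gamma\alpha/\beta$ versus $\gamma/\alpha$ rephrased in the lexicographic two-step formulation rather than in the combined $\gamma$-weighted LP of Proposition~\ref{prop:greed_lin}. Using the combined LP is a legitimate shortcut (it is proved equivalent in Appendix~A), and your reading of the first bullet of Case~2B as ``same as Case~2A'' is the correct interpretation, though your justification there is quantitatively off: keeping one unit of surplus at BS~1 forces $\beta$ extra units of $w_2$ (or $1/\alpha$ extra units of $d_1$), not one unit; the comparison $\beta>\gamma\alpha$ (resp.\ $\gamma/\alpha\ge\gamma\alpha$) still closes the argument.

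The genuine gap is your Case~4. You conclude that ``each BS empties its own storage first and pays $w_i$ for the remainder,'' but this is not the LP optimum, and it is not what Proposition~\ref{prop:algo} prescribes. If, say, BS~1's storage exceeds its own deficit, then $d_1=|E_1|/\alpha<s_1$, $E_1'=0$, and the proposition reduces to Case~2, where BS~1's \emph{leftover} stored energy is discharged and transferred to cover BS~2's residual deficit at marginal cost $\gamma/(\alpha\beta)<1$ per unit, strictly cheaper than drawing $w_2$. Your exchange argument only shows that local discharge dominates cross discharge \emph{while the local storage lasts}; it does not address the regime where one storage is exhausted and the other is not, so as written your Case~4 policy is suboptimal and the claimed equivalence with the greedy LP fails there. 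You need the additional step (present in the proposition via the reduction to Cases~1--3, and in the paper via the observation $\alpha\ge\alpha\beta$ followed by that reduction) showing that after self-compensation the problem re-enters Cases~1--3 with profiles $E_1',E_2'$, and that conventional energy is used only when \emph{both} storages are depleted.
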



Proof of this Proposition is deferred to Appendix~\ref{appen:1}. For the case of no storage ($\alpha =0$), the greedy algorithm is modified in the obvious manner by not sending any energy to storage or drawing energy from storage. It is easy to show that this modified greedy algorithm is optimal for arbitrary energy profiles.

For the case of no cooperation ($\beta =0$), the greedy algorithm is again modified in the obvious manner by not requiring transfers between two BSs. Optimality of the greedy algorithm for this case will be discussed in the next subsection. 

\subsection{Optimality Properties} \label{subsect:3b}
Although the greedy algorithm is a conceptually simple one, it has several optimality properties that we now analyze. To keep the discussion clear and provide intuition on this policy, several proofs of the results are deferred to the Appendices. We will not suppress the dependence on $t$ in this subsection as we will consider the energy profiles over time. 

\begin{proposition}
If $\beta =0$ or $\beta =1$, the greedy algorithm is optimal for arbitrary energy profiles.
\end{proposition}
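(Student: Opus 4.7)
The plan is to handle the two cases separately, reducing each to a single common lemma: greedy is optimal for a single-BS problem. For $\beta=0$ the two BSs decouple, while for $\beta=1$ they can be aggregated into an effective single BS of capacity $2S_{\rm max}$.

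First, $\beta=0$. In \eqref{eqc2}--\eqref{eqc3} the transfer terms $\beta x_{21},\beta x_{12}$ vanish, so each $x_{ij}$ appears only with coefficient $-1$ in the constraint of its source BS; setting $x_{12}=x_{21}\equiv 0$ is optimal without loss, and the LP in Theorem~\ref{thm:1} separates into two independent single-BS LPs. Inspection of Proposition~\ref{prop:algo} with $\beta=0$ shows that every transfer step yields zero effective energy at the recipient and can be dropped, so the algorithm reduces at each BS to the standard single-BS greedy rule (store own surplus; discharge own storage before drawing from grid). Optimality then follows from the single-BS lemma below applied separately to each BS.

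Next, $\beta=1$. Summing \eqref{eqc2} and \eqref{eqc3} cancels the transfer terms and yields the aggregate constraint $\bar E(t)+\bar w(t)-\bar c(t)+\alpha\bar d(t)\ge 0$, where $\bar y(t)$ abbreviates $y_1(t)+y_2(t)$; summing \eqref{eqc1}, \eqref{eqc4}, \eqref{eqc5} gives aggregate dynamics $\bar s(t+1)=\bar s(t)+\alpha\bar c(t)-\bar d(t)$ with $0\le\bar s(t)\le 2S_{\rm max}$ and $\bar d(t)\le\bar s(t)$. The aggregate problem is exactly a single-BS LP with capacity $2S_{\rm max}$; it is a relaxation of the original (any original feasible policy projects to a feasible aggregate with the same total cost $\sum_t \bar w(t)$), so its optimum lower-bounds the original optimum. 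Since $\beta=1\ge\alpha^2$, only Cases~1, 2A, 3, 4 of Proposition~\ref{prop:algo} are active; a case-by-case check shows that the aggregated actions $(\bar c,\bar d,\bar w)$ produced by Proposition~\ref{prop:algo} coincide with the aggregate single-BS greedy choices (charge the combined storage up to capacity when $\bar E\ge 0$; discharge before grid up to $\bar s$ when $\bar E<0$). Hence Proposition~\ref{prop:algo} attains the aggregate lower bound while being feasible in the original LP, establishing optimality.

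The single-BS lemma I would prove by induction on $N$, using two monotonicity facts about the optimal cost-to-go $V_N(s)$: it is nonincreasing in $s$, and satisfies the Lipschitz bound $V_N(s-\Delta)-V_N(s)\le \alpha\Delta$ (one unit of lost storage discharges to at most $\alpha$ units of usable energy, so costs at most $\alpha$ extra grid). For the inductive step at slot~1 with $E(1)\ge 0$: using grid is weakly suboptimal since a unit of grid charged and later discharged displaces at most $\alpha^2<1$ units of future grid, so WLOG $w(1)=0$, and greedy's maximal $c(1)=\min(E(1),(S_{\rm max}-s(1))/\alpha)$ then maximizes $s(2)$ and hence minimizes $V_{N-1}(s(2))$ by monotonicity. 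At slot~1 with $E(1)<0$: increasing $d(1)$ by one unit saves $\alpha$ of current grid at a future cost of at most $\alpha$ by the Lipschitz bound, so greedy's maximal $d(1)=\min(s(1),|E(1)|/\alpha)$ is weakly optimal and any residual deficit is covered by $w(1)$. I expect the main obstacle to be a clean derivation of the Lipschitz bound — simulating an $s$-optimal policy from $s-\Delta$ at bounded extra cost — and, in the $\beta=1$ direction, the bookkeeping to confirm that Proposition~\ref{prop:algo}'s individual actions really realize aggregate greedy rather than merely some feasible aggregate action.
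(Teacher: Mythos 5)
Your proposal is correct and follows essentially the same route as the paper: for $\beta=0$ the problem decouples into two independent single-BS problems, and for $\beta=1$ it aggregates into one effective BS with combined net energy and storage capacity $2S_{\rm max}$, after which everything rests on single-BS greedy optimality. The only difference is that the paper simply cites an external result (Su \emph{et al.}, Theorem~1) for that single-BS lemma, whereas you sketch a self-contained induction via monotonicity and the $\alpha$-Lipschitz bound on the cost-to-go — which is sound and mirrors the technique the paper itself uses in Proposition~\ref{prop2}.
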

\begin{proof}
When $\beta =1$, the system reduces to a single BS with $E(t) = E_1(t) + E_2(t)$ and $0 \le s(t) \le 2S_{\rm max}$. The optimality of the greedy algorithm can then be inferred from~\cite[Theorem 1]{Su2011}.

For the case of $\beta =0$, no cooperation between the two BSs is possible. Optimality of the modified greedy algorithm for this case follows again from the fact that the greedy algorithm is optimal for individual BSs~\cite[Theorem 1]{Su2011}.
\end{proof}

We now proceed to analyze the greedy algorithm for the non-boundary cases of $0< \beta < 1$. It will be useful to define the following quantities. Define the unnormalized \textit{cost-to-go} function under policy $\pi$ at time $t$ and state $s(t)$ as 
\begin{align}
J_{\pi}(s(k)) = \E \left(\sum_{t = k}^N \mathbf{1}^Tw_{\pi}(t)\right).
\end{align} 
We also denote the optimal cost-to-go function under the optimal policy $\pi^*$ as $J_{\pi^*}(s(t))$. For our setting, $J_{\pi}(s(t))$ has a number of useful properties, which we now state. The results that follow in the rest of this section hold for any stochastic net energy profiles. Therefore, in our proofs, we will suppress the dependence of the control policy on the joint distribution of $E_1(t)$ and $E_2(t)$.
\begin{proposition} \label{prop2}
Suppose $s'(k) \ge s(k)$ component-wise. Then,
\begin{align}
J_{\pi^*}(s(k)) \le J_{\pi^*}(s'(k)) + \alpha \mathbf{1}^T(s'(k) - s(k)). \label{eq:1}
\end{align}
\end{proposition}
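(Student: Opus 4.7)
The proof will use a pathwise coupling argument built on the intuition that one extra unit of stored energy at BS $i$ is worth at most $\alpha$ units of conventional energy, because the discharge $d_i(t)$ contributes only $\alpha d_i(t)$ in the energy-balance inequalities \eqref{eq1}--\eqref{eq2}. The plan is to construct, from the optimal policy $\pi^*$ starting at $s'(k)$, a feasible (generally suboptimal) policy $\pi'$ starting at $s(k)$ whose conventional-energy cost on every sample path exceeds that of $\pi^*$ by at most $\alpha \mathbf{1}^T(s'(k) - s(k))$. Taking expectations and invoking optimality of $\pi^*$ from state $s(k)$, namely $J_{\pi^*}(s(k)) \le J_{\pi'}(s(k))$, will yield \eqref{eq:1}.

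Fix a sample path of $\{E(t)\}_{t \ge k}$ and let $\{(w^*(t), c^*(t), d^*(t), x^*_{12}(t), x^*_{21}(t))\}$ with storage trajectory $\{s'(t)\}$ denote the execution of $\pi^*$. Define $\pi'$ from $s(k)$ by copying the charging and transfer actions, $c'(t) = c^*(t)$, $x'_{12}(t) = x^*_{12}(t)$, $x'_{21}(t) = x^*_{21}(t)$, capping the discharge at the currently available storage, $d'_i(t) = \min\{d^*_i(t), s_i(t)\}$, and compensating for the resulting shortfall by additional conventional energy, $w'_i(t) = w^*_i(t) + \alpha(d^*_i(t) - d'_i(t))$. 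Since $\pi'$ is a deterministic function of $\pi^*$ and the history of $E(\cdot)$, it is adapted and thus a legitimate control policy.

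Next I would verify feasibility of $\pi'$. Constraints \eqref{eq1}--\eqref{eq2} hold by construction of $w'_i(t)$, the constraint $d'_i(t) \le s_i(t)$ is built in, and the storage cap $s_i(t+1) \le S_{\rm max}$ holds in both sub-cases: if $d'_i(t) = d^*_i(t)$ then $s_i(t+1) \le s'_i(t+1) \le S_{\rm max}$, while if $d'_i(t) = s_i(t) < d^*_i(t)$ then $s_i(t+1) = \alpha c^*_i(t)$, which the feasibility of $\pi^*$ together with $d^*_i(t) \le s'_i(t)$ bounds by $S_{\rm max} - (s'_i(t) - d^*_i(t)) \le S_{\rm max}$. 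I would then track the gap $\Delta_i(t) := s'_i(t) - s_i(t) \ge 0$. Because the charge and transfer actions coincide, a one-line computation gives $\Delta_i(t+1) = \Delta_i(t) - (d^*_i(t) - d'_i(t))$, so $\Delta_i(\cdot)$ is non-increasing and the per-step extra cost satisfies $w'_i(t) - w^*_i(t) = \alpha(\Delta_i(t) - \Delta_i(t+1))$. Telescoping over $t = k, \ldots, N$ and summing over $i$ yields $\mathbf{1}^T \sum_{t=k}^{N}(w'(t) - w^*(t)) \le \alpha \mathbf{1}^T (s'(k) - s(k))$ on every sample path; taking expectations completes the proof.

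The main subtlety I expect is the storage-cap verification at times where $\pi'$ discharges strictly less than $\pi^*$, which is what forces the use of both inequalities $d^*_i(t) \le s'_i(t)$ and $s'_i(t+1) \le S_{\rm max}$ from feasibility of $\pi^*$. The other feasibility checks and the clean cancellation between the extra conventional energy and the decrement in $\Delta_i$ are essentially automatic once the coupling is written down.
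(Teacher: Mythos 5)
Your proposal is correct and takes essentially the same route as the paper: the paper's proof also builds the mimicking policy that copies the charging/transfer actions of $\pi^*$, caps each discharge at the currently available storage, pays the shortfall with $\alpha(d_i^*(t)-d_i(t))$ of extra conventional energy, and telescopes the storage gap $s'_i(t)-s_i(t)$ to bound the total extra cost by $\alpha\,\mathbf{1}^T(s'(k)-s(k))$. Your write-up is, if anything, slightly more explicit on the $S_{\rm max}$ feasibility check, but the argument is the same.
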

The proof is given in Appendix~\ref{append:prop2}
It is obvious that a system starting with higher stored energy states has a lower optimum cost. Proposition~\ref{prop2} quantifies the maximum additional cost incurred by a system starting from a lower storage state. As an example, Proposition~\ref{prop2} formalizes the obvious fact that energy drawn from the main grid is never used to increase storage levels. 
\begin{corollary} \label{coro1}
For an optimal policy, energy is never drawn from the grid to increase the storage levels. That is, for any $\Delta_1, \Delta_2, \Delta_W \ge 0$ such that $\Delta_W = \Delta_1 + \Delta_2$, we have
\begin{align*}
J_{\pi^*}\left(\left[\begin{array}{c}s_1(k) + \alpha \Delta_1 \\ s_2(k)+ \alpha \Delta_2\end{array}\right]\right) + \Delta_W & \ge J_{\pi^*}\left(\left[\begin{array}{c}s_1(k) \\ s_2(k)\end{array}\right]\right),
\end{align*} 
which implies that the optimal energy to be drawn, $\Delta_W$, should be zero.
\end{corollary}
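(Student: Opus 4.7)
The plan is to derive this corollary as an immediate consequence of Proposition~\ref{prop2}, by specializing the inequality~\eqref{eq:1} to the particular storage increase induced by drawing $\Delta_W$ units from the grid and routing them into storage.

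Specifically, I would set $s(k) = [s_1(k),\, s_2(k)]^T$ and $s'(k) = [s_1(k) + \alpha \Delta_1,\, s_2(k) + \alpha \Delta_2]^T$. Since $\alpha, \Delta_1, \Delta_2 \ge 0$, the component-wise inequality $s'(k) \ge s(k)$ required by Proposition~\ref{prop2} holds, and a direct calculation gives $\alpha \mathbf{1}^T(s'(k) - s(k)) = \alpha^2(\Delta_1 + \Delta_2) = \alpha^2 \Delta_W$. Plugging into~\eqref{eq:1} yields $J_{\pi^*}(s(k)) \le J_{\pi^*}(s'(k)) + \alpha^2 \Delta_W$. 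To finish, I use that $\alpha \in [0,1]$, so $\alpha^2 \Delta_W \le \Delta_W$, which rearranges to exactly the inequality claimed in the corollary.

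The physical reading is that the left-hand side represents the total cost of drawing $\Delta_W$ from the grid at time $k$, routing the amounts $\Delta_1$ and $\Delta_2$ into the two storages (which rise by $\alpha\Delta_1$ and $\alpha\Delta_2$ after the storage-inefficiency factor), and then continuing optimally; the right-hand side is the optimal cost starting from $s(k)$ with no grid draw at time $k$. The corollary thus asserts that any such grid-charging action is weakly dominated by doing nothing, so $\Delta_W = 0$ is optimal.

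There is essentially no obstacle once Proposition~\ref{prop2} is in hand; the only point worth emphasizing is that the coefficient on $\Delta_W$ that naturally appears is $\alpha^2 \le 1$, coming from composing two efficiency factors---one from the storage-charging action used to form $s'(k)$, and one from the bound in~\eqref{eq:1} itself---so the inequality has strict slack whenever $\alpha < 1$, reflecting the strict suboptimality of grid-to-storage transfers under lossy storage.
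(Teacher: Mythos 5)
Your proposal is correct and follows exactly the route the paper intends: the paper states the corollary is immediate from Proposition~\ref{prop2}, and your specialization of~\eqref{eq:1} with $s'(k)-s(k)=[\alpha\Delta_1,\alpha\Delta_2]^T$, giving the bound $\alpha^2\Delta_W\le\Delta_W$, is the intended one-line argument. Your added remark that the natural coefficient is $\alpha^2$, so the inequality has strict slack when $\alpha<1$, is a correct refinement but not a different method.
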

Proof of this corollary is immediate from Proposition~\ref{prop2}. 

The bound given in Proposition~\ref{prop2} can be strengthened in various ways if more information about the energy profiles are known. We state the following claim that will be used in the sequel.
\begin{claim} \label{clm:bound}
If $E(t) \ge 0$ for $t \ge k$ and $\Delta \le (S_{\rm max} - s_1(k))/\alpha$,
\begin{align*}
J_{\pi^*}\left(\left[\begin{array}{c}s_1(k) \\ s_2(k)\end{array}\right]\right) & \le J_{\pi^*}\left(\left[\begin{array}{c}s_1(k) + \Delta\\ s_2(k)\end{array}\right]\right) + \alpha \beta \Delta.
\end{align*}
\end{claim}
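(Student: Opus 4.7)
The plan is to prove the claim by a coupling argument. Let $\pi^*$ denote an optimal control policy for the system initialized at the higher-storage state $s'(k) := (s_1(k)+\Delta,\, s_2(k))^T$, with optimal cost $J_{\pi^*}(s'(k))$. We construct a feasible policy $\pi'$ starting from $s(k) := (s_1(k),\, s_2(k))^T$ whose total cost exceeds $J_{\pi^*}(s'(k))$ by at most $\alpha\beta\Delta$; optimality then yields $J_{\pi^*}(s(k)) \le J_{\pi'}(s(k)) \le J_{\pi^*}(s'(k)) + \alpha\beta\Delta$ and the claim follows. The guiding intuition is that, because $E_1(t) \ge 0$ for all $t\ge k$, BS~$1$ never needs conventional energy to satisfy its own demand, so the only way the extra $\Delta$ units of phantom storage at BS~$1$ can reduce cost is to be discharged (losing a factor $\alpha$) and then transferred to BS~$2$ over the power line (losing a further factor $\beta$); thus each phantom unit can save at most $\alpha\beta$ units of $w_2$.

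Without loss of generality, we assume (i)~$c_1^*(t)\,d_1^*(t) = 0$, since, as the paper notes after Theorem~\ref{thm:1}, the optimal solution never simultaneously charges and discharges the same battery, and (ii)~BS~$1$'s balance~\eqref{eq1} holds with equality under $\pi^*$, since any positive slack can be absorbed by increasing $x_{12}^*$ and decreasing $w_2^*$ by $\beta$ times the slack, which weakly reduces cost. We then define $\pi'$ to copy $\pi^*$ in every control coordinate except $d_1,\, x_{12},\, w_2$: at each time $t$, put $d_1^{\pi'}(t) = \min\{d_1^*(t),\, s_1^{\pi'}(t)\}$, let $\eta(t) := d_1^*(t) - d_1^{\pi'}(t) \ge 0$ be the discharge shortfall at BS~$1$, and take $x_{12}^{\pi'}(t) = x_{12}^*(t) - \alpha\eta(t)$ together with $w_2^{\pi'}(t) = w_2^*(t) + \alpha\beta\eta(t)$; all remaining controls are copied verbatim from $\pi^*$.

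The crux is feasibility of $\pi'$. Direct substitution shows that BS~$1$'s balance~\eqref{eq1} under $\pi'$ differs from that of $\pi^*$ by $-\alpha\eta(t) + \alpha\eta(t) = 0$, while BS~$2$'s balance~\eqref{eq2} differs by $+\alpha\beta\eta(t) - \alpha\beta\eta(t) = 0$, so both inequalities remain satisfied. Non-negativity of $x_{12}^{\pi'}(t)$ is the key verification: by the no-slack assumption, $x_{12}^*(t) = E_1(t) + w_1^*(t) - c_1^*(t) + \alpha d_1^*(t) + \beta x_{21}^*(t)$, and since $d_1^*(t) > 0$ forces $c_1^*(t) = 0$ while $E_1(t),\, w_1^*(t),\, x_{21}^*(t) \ge 0$, we obtain $x_{12}^*(t) \ge \alpha d_1^*(t) \ge \alpha\eta(t)$. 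For storage, the shadow variable $\Delta(t) := s_1^*(t) - s_1^{\pi'}(t)$ satisfies $\Delta(k) = \Delta$ and $\Delta(t{+}1) = \Delta(t) - \eta(t)$, remaining in $[0, \Delta]$, so $s_1^{\pi'}(t) \in [0, S_{\rm max}]$ throughout. Telescoping yields $\sum_{t\ge k}\eta(t) = \Delta - \Delta(N{+}1) \le \Delta$, whence the excess cost of $\pi'$ over $\pi^*$ is $\alpha\beta \sum_t \eta(t) \le \alpha\beta\Delta$. The hardest step is the non-negativity check for $x_{12}^{\pi'}$; it relies crucially on the two wlog reductions above, without which one could only recover the weaker factor $\alpha$ of Proposition~\ref{prop2} in place of $\alpha\beta$.
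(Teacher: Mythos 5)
Your proof is correct and follows exactly the route the paper sketches (and omits): the lower-storage system mimics the optimal policy of the higher-storage system as in Proposition~\ref{prop2}, but the discharge shortfall at BS~1 is compensated by shrinking $x_{12}$ and drawing $\alpha\beta\eta(t)$ of conventional energy at BS~2 rather than $\alpha\eta(t)$ at BS~1. Your two normalizations ($c_1^*d_1^*=0$ and zero slack in BS~1's balance), which give $x_{12}^*(t)\ge\alpha d_1^*(t)$ and hence feasibility of the modified transfer, supply precisely the rigor the paper's intuitive remark leaves implicit.
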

The proof of this claim follows similar arguments to that in Proposition~\ref{prop2} and uses the assumption that $E_1(t) \ge 0$ for all $t \ge k$. As the proof is quite similar to that in Proposition~\ref{prop2}, we will omit the proof here. Instead, we give the intuition for this claim. Since $E_1(t) \ge 0$, the additional stored energy can only be used to compensate for any deficit at BS 2. The total additional energy that can be sent to BS 2 is $\alpha \beta \Delta$. Instead of using storage at BS 1, we can compensate using conventional energy, incurring an additional cost of at most $\alpha \beta \Delta$. 

The same proof strategy whereby a system at a different storage state $s$ \textit{mimics} the optimal policy of the same system at storage state $s'$ can be also used to prove the following two intuitively obvious propositions. Due to space limitation, proofs of Propositions~\ref{prop3} and~\ref{prop4} are omitted in this paper. 
\begin{proposition} \label{prop3}
Case 1: It is optimal to store excess energy at each of BS 1 and BS 2 first if there is still storage available, rather than to transfer the energy between them for storage. More concretely, suppose $\Delta >0$ units of energy is available at BS 1 at time $t = k$ with $s_1(k) + \alpha \Delta \le S_{\rm max}$ and $s_2(k) + \alpha \beta \Delta \le S_{\rm max}$, then
\begin{align*}
J_{\pi^*}\left(\left[\begin{array}{c}s_1(k) + \alpha \Delta \\ s_2(k)\end{array}\right]\right) \le J_{\pi^*}\left(\left[\begin{array}{c}s_1(k) \\ s_2(k) + \alpha \beta \Delta\end{array}\right]\right).
\end{align*}

Case 2: Suppose that there is a deficit of $-\Delta$ ($\Delta \ge 0$) at BS 1 such that $\Delta \le \min\{\alpha s_1(k), \alpha\beta s_2(k)\}$, which has to be compensated by either storage at BS 1 or storage at BS 2. Then, it is optimal to compensate using storage at BS 1 rather than storage at BS 2. That is, 
\begin{align*}
J_{\pi^*}\left(\left[\begin{array}{c}s_1(k) - \Delta/\alpha \\ s_2(k)\end{array}\right]\right) \le J_{\pi^*}\left(\left[\begin{array}{c}s_1(k) \\ s_2(k) -  \Delta/(\alpha \beta)\end{array}\right]\right).
\end{align*}

\end{proposition}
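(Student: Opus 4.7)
My plan is to prove both cases by the coupling/mimicking argument used in the proof of Proposition~\ref{prop2}, now carried out as a backward induction on $k$. Writing $s$ for the state on the left-hand side of each claimed inequality and $s'$ for that on the right-hand side, let $\pi'$ denote the optimal policy for the system starting from $s'$ at time $k$. The goal is to construct a feasible policy $\pi$ for the system starting at $s$ whose conventional-energy draws match $\pi'$'s at every time step, i.e.~$w^\pi(t)=w^{\pi'}(t)$; this alone yields $J_\pi(s)=J_{\pi'}(s')=J_{\pi^*}(s')$, and $J_{\pi^*}(s)\le J_\pi(s)$ delivers the claim.

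The construction relies on the storage--transfer identity: discharging $\delta$ units of storage at BS~1 and shipping the result to BS~2 delivers $\alpha\beta\delta$ usable units at BS~2, exactly matching the delivery obtained from discharging $\beta\delta$ units directly at BS~2. In Case~1 this equates the deliverable energy from the extra $\alpha\Delta$ at BS~1 (in state $s$) and from the extra $\alpha\beta\Delta$ at BS~2 (in state $s'$): whenever $\pi'$ draws some amount $\delta'$ from its BS~2 reserve beyond what $\pi$'s BS~2 can sustain, $\pi$ draws $\delta'/\beta$ from BS~1 and sends $x_{12}=\alpha\delta'/\beta$ across the line. In Case~2, the missing $\Delta/\alpha$ at BS~1 (in $s$) and the missing $\Delta/(\alpha\beta)$ at BS~2 (in $s'$) represent equal losses of usable energy at BS~1; so whenever $\pi'$ would draw BS~1 storage that is absent in $\pi$, $\pi$ substitutes storage from its more-endowed BS~2 routed through $x_{21}$. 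All other controls are inherited verbatim from $\pi'$.

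Two feasibility issues need handling. When $\pi$'s storage at the over-endowed BS would exceed $S_{\rm max}$ (possible at BS~1 in Case~1 and at BS~2 in Case~2), $\pi$ reduces its charge and routes the would-be excess across the inter-BS link for charging at the other side; the choice $(c_1^\pi,x_{12}^\pi,c_2^\pi)=(c'_1-\delta,\,x'_{12}+\delta,\,c'_2+\beta\delta)$ (and its symmetric version for Case~2) preserves the matched state-difference form $(\alpha\Delta^{t+1},-\alpha\beta\Delta^{t+1})$ with a reduced $\Delta^{t+1}\le\Delta^t$, so the inductive hypothesis still applies. When $\pi$'s storage at the under-endowed BS would violate non-negativity or the constraint $d(t)\le s(t)$, the exchange mechanism above is invoked, which again contracts the parameter $\Delta^t$. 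The hypotheses $s_1(k)+\alpha\Delta\le S_{\rm max}$, $s_2(k)+\alpha\beta\Delta\le S_{\rm max}$ in Case~1 and $\Delta\le\min\{\alpha s_1(k),\alpha\beta s_2(k)\}$ in Case~2 ensure both that the initial states are feasible and that the full exchange budget is available throughout.

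The main obstacle I foresee is the detailed bookkeeping needed to maintain the state-difference invariant through every possible combination of overflow, underflow, and substitution at both BSs. Each repair has to be chosen so that the combined update keeps $\tilde{s}(t+1)-\tilde{s}'(t+1)$ on the ``matched'' ray $\{(\alpha\tau,-\alpha\beta\tau):\tau\ge 0\}$ (or the Case~2 analogue), which is exactly what the inductive hypothesis demands. In corner cases where $\pi'$'s $c'_1$ (respectively $c'_2$) is too small to absorb the needed reduction while preserving matched form, one must either design a more elaborate combined repair spreading the adjustment over several controls, or relax the invariant to an inequality whose residual can be controlled via Proposition~\ref{prop2} and then telescoped across time. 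Once the invariant, or its relaxation, is secured, the cost identity $w^\pi=w^{\pi'}$ built into every step immediately yields the desired bound.
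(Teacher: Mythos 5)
Your proposal is correct and follows essentially the same route as the paper's proof in Appendix~\ref{pf:storage}: the better-endowed-at-BS-1 system mimics the optimal policy of the other system, repairing any deficit at BS~2 by discharging $(d_2^*(t)-s_2''(t))/\beta$ at BS~1 and routing it across the line, repairing overflow at BS~1 by diverting the excess charge to BS~2, and maintaining a state-difference invariant by induction over $t\ge k$. The only adjustment needed is the one you already flag yourself: the exact ``matched ray'' invariant must be relaxed to the one-sided form $s_2'(t)-s_2''(t)\le\beta\bigl(s_1''(t)-s_1'(t)\bigr)$ together with $s_1''(t)\ge s_1'(t)$ and $s_2'(t)\ge s_2''(t)$, which is precisely the paper's Claim~\ref{clm:store}.
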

Proof of this Proposition is deferred to Appendix~\ref{pf:storage}
\begin{proposition} \label{prop4} If $\beta > \alpha$, then energy transfer is always optimal. That is, if at time $t = k$, $E_1(k) >0 >  E_2(k)$, we can assume without loss of generality that sending $\Delta = \min\{|E_2(k)|/\beta, E_1(k)\}$ units of energy from BS 1 to BS 2 at time $t = k$ to compensate for the deficit of $E_2(k)$ is part of an optimal policy. More formally, let $J_{\pi^*}(s(k), E_1(k), E_2(k))$ be the optimal cost to go function\footnote{We suppress the dependence of $\pi^*$ on past histories here.}. Then,
\begin{align*}
J_{\pi^*}(s(k), E_1(k)- \Delta, E_2(k)+ \Delta) &\le J_{\pi^*}(s(k), E_1(k), E_2(k)). 
\end{align*}
\end{proposition}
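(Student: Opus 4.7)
The plan is to take any optimal policy $\pi^*$ and, without changing its actions at times $t\neq k$, modify only its time-$k$ action to produce a policy $\pi'$ with $x_{12}'(k) = \Delta$, then verify $J_{\pi'}(s(k)) \le J_{\pi^*}(s(k))$. Because $\pi^*$ and $\pi'$ agree on $t\neq k$, it suffices to compare the instantaneous cost at time $k$ and the cost-to-go $J_{\pi^*}$ at the resulting state $s(k+1)$.

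First, I would reduce the form of $\pi^*$ at time $k$ without loss of optimality. Since $E_1(k) > 0$, standard cancellation arguments let us take $w_1^*(k) = d_1^*(k) = x_{21}^*(k) = 0$ and no waste at BS~1; symmetrically, $E_2(k) < 0$ lets us take $c_2^*(k) = 0$ and no waste at BS~2. The crucial additional reduction is $c_1^*(k)\, w_2^*(k) = 0$: if both were positive, the micro-swap $c_1 \mapsto c_1 - \epsilon$, $x_{12} \mapsto x_{12} + \epsilon$, $w_2 \mapsto w_2 - \beta\epsilon$ saves $\beta\epsilon$ of current cost while, by Proposition~\ref{prop2}, raising future cost by at most $\alpha^2\epsilon$; since $\beta > \alpha \ge \alpha^2$, this contradicts optimality of $\pi^*$.

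If $c_1^*(k) = 0$, the BS~1 balance forces $x_{12}^*(k) = E_1(k) \ge \Delta$ and we are done. Otherwise $w_2^*(k) = 0$ and $c_1^*(k) > 0$; write $z = x_{12}^*(k)$ and suppose $z < \Delta$ (else done), setting $\delta = \Delta - z > 0$. The balances $E_1(k) = c_1^* + z$ and $|E_2(k)| = \alpha d_2^* + \beta z$, together with $\Delta \le E_1(k)$ and $\beta\Delta \le |E_2(k)|$, yield $c_1^* \ge \delta$ and $d_2^* \ge \beta\delta/\alpha$. The swap
\begin{align*}
c_1'(k) = c_1^*(k) - \delta, \quad x_{12}'(k) = \Delta, \quad d_2'(k) = d_2^*(k) - \beta\delta/\alpha
\end{align*}
is therefore feasible, preserves both energy balances, leaves the current cost unchanged, and yields $s_1'(k+1) = s_1^*(k+1) - \alpha\delta$, $s_2'(k+1) = s_2^*(k+1) + \beta\delta/\alpha$.

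The central step is the future-cost bound. Applying Proposition~\ref{prop3} Case~2 at the state $(s_1^*(k+1),\, s_2^*(k+1) + \alpha\delta/\beta)$ with parameter $\alpha^2\delta$ gives
\begin{align*}
J_{\pi^*}\bigl(s_1^*(k+1)-\alpha\delta,\; s_2^*(k+1)+\alpha\delta/\beta\bigr) \le J_{\pi^*}(s^*(k+1));
\end{align*}
the feasibility hypotheses reduce to $s_1^*(k+1) \ge \alpha\delta$ (from $c_1^*\ge\delta$) and $s_2^*(k+1)\ge 0$. The assumption $\beta > \alpha$ is then used precisely to obtain $\beta\delta/\alpha \ge \alpha\delta/\beta$, so monotonicity of $J_{\pi^*}$ in its second coordinate upgrades this to $J_{\pi^*}(s'(k+1)) \le J_{\pi^*}(s^*(k+1))$, giving $J_{\pi'} \le J_{\pi^*}$ as required. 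The hard part is justifying the WLOG reduction $c_1^*(k)\, w_2^*(k) = 0$ and the other cancellations rigorously, since these are what pin down enough structure on $\pi^*$ for Proposition~\ref{prop3} to apply; once that structure is in place, the swap itself and the final monotonicity step are short.
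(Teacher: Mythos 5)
Your argument is correct, and it reaches the conclusion by a route that is organized differently from the paper's. The paper first shows (via Proposition~\ref{prop2} and Corollary~\ref{coro1}) that transferred energy should cancel the deficit before charging storage at BS~2, and then splits on \emph{how} the residual deficit caused by withholding energy at BS~1 is covered: if by conventional energy, it applies Proposition~\ref{prop2} and $\beta>\alpha\ge\alpha^2$; if by discharging storage~2, it proves a fresh mimicking claim (the inequalities $s_1'(t)\ge s_1(t)$, $s_2(t)\ge s_2'(t)$, $s_1'(t)-s_1(t)\le\beta(s_2(t)-s_2'(t))$, with $\beta>\alpha$ entering at the base case $t=k$). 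Your proof maps onto this almost exactly --- your micro-swap establishing $c_1^*(k)\,w_2^*(k)=0$ \emph{is} the paper's conventional-energy case, and your explicit swap $c_1\mapsto c_1-\delta$, $x_{12}\mapsto\Delta$, $d_2\mapsto d_2-\beta\delta/\alpha$ is the storage case --- but you replace the paper's bespoke induction claim with an application of the already-stated Proposition~\ref{prop3} Case~2 at the shifted state $(s_1^*(k+1),\,s_2^*(k+1)+\alpha\delta/\beta)$ with parameter $\alpha^2\delta$, followed by monotonicity of $J_{\pi^*}$ and the observation that $\beta>\alpha$ gives $\beta\delta/\alpha\ge\alpha\delta/\beta$. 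This is a genuine economy: it isolates $\beta>\alpha$ in a single one-line comparison and avoids re-running a mimicking induction, at the price of leaving the front-end reductions (no waste, $d_1^*(k)=x_{21}^*(k)=c_2^*(k)=0$) informal --- though the paper is no more rigorous about these. Your feasibility checks ($c_1^*\ge\delta$, $d_2^*\ge\beta\delta/\alpha$, $s_1^*(k+1)\ge\alpha\delta$) are right. One small correction: $d_1^*(k)=0$ is not literally without loss of generality, since the optimal policy may well discharge storage~1 to help BS~2 (this is exactly Case~2A of Proposition~\ref{prop:algo}); the clean statement is that if $d_1^*(k)>0$ then $c_1^*(k)=0$ and the no-waste balance already forces $x_{12}^*(k)=E_1(k)+\alpha d_1^*(k)\ge\Delta$, so that branch is vacuous and the remaining analysis may assume $d_1^*(k)=0$.
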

Proof of this Proposition is deferred to Appendix~\ref{pf:transfer}.

\textit{Remarks on Propositions~\ref{prop3} and~\ref{prop4}}: As with Proposition~\ref{prop2}, the above two propositions show formally some intuitive aspects of energy cooperation that we would expect in such a system, and also certain optimality aspects of the greedy algorithm. Proposition~\ref{prop3} shows the intuitively obvious fact that it is better to store energy locally rather than to store the energy at storage of the other BS. Proposition~\ref{prop4} formalizes the notion that if $\beta> \alpha$, then it is more cost efficient to transfer energy to help the other BS rather than to store energy for future use, since the proportional loss in energy storage ($1- \alpha$) is higher than that in energy transmission ($1 - \beta$). Observe that the condition $\beta > \alpha$ is a special case of Case 2A of our greedy algorithm, in which energy transfer is always carried out first, rather than compensating using local storage. 

Next, using Propositions~\ref{prop2} to~\ref{prop4}, we arrive at the optimality of the greedy algorithm for some special cases of the energy profiles. 

\begin{proposition} \label{prop5}
If $E_1(t) \ge 0$ for all $1 \le t \le N$ and $\beta > \alpha$, then the greedy policy is optimal. By symmetry, the same result holds if, instead, $E_2(t) \ge 0$ for all $1 \le t \le N$ and $\beta > \alpha$.
\end{proposition}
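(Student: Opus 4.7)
The plan is to proceed by backward induction on time, showing that at each time step the greedy action is consistent with some optimal policy, invoking the exchange-argument propositions already established. Since $E_1(t) \ge 0$ for every $t$, the greedy algorithm never invokes Cases 3 or 4; moreover $\beta > \alpha \ge \alpha^2$ rules out Case 2B, so only Cases 1 and 2A are ever active. As an immediate consequence $w_1(t) = 0$ under the greedy policy for all $t$, so it suffices to show that the greedy policy minimizes $\E\sum_t w_2(t)$.

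I would set up Bellman's equation for the optimal cost-to-go $J_{\pi^*}(s(k))$ and induct from $k = N+1$ downwards (base case trivial). At a generic time $k$, given any optimal action, I transform it into the greedy action via a finite sequence of local swaps, each of which does not increase the expected cost. The key tools are: (i) Proposition~\ref{prop4}, which legitimizes the direct transfer $x_{12} = \min\{E_1(k), |E_2(k)|/\beta\}$ in Case 2A, since $\beta > \alpha$ makes a direct transfer strictly preferable to storing the excess and transferring later; (ii) the symmetric analogue of Proposition~\ref{prop3}, Case 2, which justifies discharging BS~2's own storage (efficiency $\alpha$) before BS~1's storage routed through the power line (effective efficiency $\alpha\beta$) when a deficit remains at BS~2; (iii) Proposition~\ref{prop3}, Case 1, which shows that residual excess at BS~1 after a Case~2A transfer is better stored locally than routed to BS~2 for storage; and (iv) Corollary~\ref{coro1}, which guarantees that conventional energy $w_2$ is drawn only after all cheaper options are exhausted. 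After these swaps, the action at time $k$ coincides with the greedy action, and by the inductive hypothesis the optimal continuation from the resulting $s(k+1)$ is itself greedy.

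The main obstacle will be the bookkeeping of the swap argument when the control variables are coupled: after converting a non-greedy split between $d_2$, $d_1$ (via transfer), and $w_2$ into the greedy split, the downstream storage state $s(k+1)$ shifts, and one must use Proposition~\ref{prop2} — together with Claim~\ref{clm:bound} whenever forward excess is ``free'' because of $E_1(t) \ge 0$ — to bound the induced change in future cost and confirm the swap is non-increasing. A secondary subtlety is that when Case~2A reduces internally to Case~1 (i.e.\ when $E_1(k) > |E_2(k)|/\beta$ and residual surplus is fed back into the Case~1 routine), one must verify that this nested invocation is also covered by the same inductive hypothesis; this follows because the reduced sub-problem still has $E_1' = E_1 - x_{12} \ge 0$ and identical subsequent profile. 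Once the swaps are ordered as transfer, then local discharge at BS~2, then remote discharge via BS~1, then conventional draw, the inductive step closes and yields the claimed optimality.
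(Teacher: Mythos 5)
Your proposal follows essentially the same route as the paper's proof: backward induction with a one-step greedy deviation, converting an arbitrary optimal action at time $k$ into the greedy one via exchange arguments based on Proposition~\ref{prop4} (transfer first), Proposition~\ref{prop3} (local storage/discharge preferred), Corollary~\ref{coro1} (no grid energy into storage), and then bounding the shifted continuation cost with Proposition~\ref{prop2} and Claim~\ref{clm:bound} — exactly the tools and case split (deficit fully vs.\ partially covered by transfer, residual surplus reducing to Case 1) used in Appendix~\ref{append:prop5}. The plan is sound and matches the paper's argument, including the identified bookkeeping of the storage-level shifts $\Delta_1,\Delta_2$ relative to the greedy trajectory.
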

Proof of this Proposition is given in Appendix~\ref{append:prop5}. The intuition behind Proposition~\ref{prop5} is that if $E_1(t)\ge 0$ for all $t$, then energy should be transferred to BS 2 to compensate for any possible deficit. The condition of $\beta > \alpha$ ensures that it is always more efficient to transfer energy to BS 2 in the current time step than to store it for possible use in later time steps. 

The condition that $\beta > \alpha$ can be relaxed, if more assumptions can be made about the energy profile $E_2(t)$.
\begin{proposition} \label{prop6}
If $E_1(t) \ge 0$ and $E_2(t)\le 0$ for all $1 \le t \le N$, then the \textit{modified} greedy policy, in which the policy in Case 2A of Proposition~\ref{prop:algo} is implemented at each time $t$, is an optimal policy. Similar to Proposition~\ref{prop5}, by symmetry, the same result holds if $E_2(t) \ge 0$ and $E_1(t)\le 0$. 
\end{proposition}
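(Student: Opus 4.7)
The plan is to proceed by backward induction on $t$, closely paralleling the proof of Proposition~\ref{prop5} but replacing the step that invoked Proposition~\ref{prop4} (which required $\beta>\alpha$) with a tailored exchange argument that leverages the hypothesis $E_2(t)\le 0$ for all $t$. For the base case $t=N$, stored energy has no future value, so the problem collapses to myopically minimizing $w_1(N)+w_2(N)$. Since $E_1(N)\ge 0$ we take $w_1(N)=0$ optimally, and the deficit $|E_2(N)|$ is covered by sources in decreasing order of per-unit BS~2 delivery efficiency---direct transfer ($\beta$), BS~2 discharge ($\alpha$), BS~1 discharge combined with transfer ($\alpha\beta$), and finally conventional energy---which is exactly the Case~2A prescription.

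For the inductive step, assume the modified greedy is optimal from time $k+1$ onward, and let $\pi$ be any feasible action at time $k$ leading to post-step state $s^{\pi}_{k+1}$. The core sub-claim is a \emph{transfer-first} exchange: within the surplus sub-case $\beta E_1(k)\ge |E_2(k)|$, consider an alternative that stores $\delta$ extra units of $E_1(k)$ locally, i.e.\ sets $c_1(k)=E_1(k)-|E_2(k)|/\beta+\delta$, reduces $x_{12}(k)$ by $\delta$, and consequently raises $d_2(k)$ by $\beta\delta/\alpha$ in order to keep the current-step cost unchanged. Its post-step state equals greedy's shifted by $(+\alpha\delta,-\beta\delta/\alpha)$. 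Valuing each storage unit by its round-trip efficiency to BS~2 ($\alpha\beta$ for BS~1 storage, $\alpha$ for BS~2 storage), greedy's post-step state carries $\beta\delta(1-\alpha^2)\ge 0$ more effective future BS~2 credit. Because $E_2(t)\le 0$ for every $t\ge k+1$, future deficits guarantee this surplus is realized as reduced $w_2$, yielding $J_{\pi^*}(s^{G}_{k+1})\le J_{\pi^*}(s^{\pi}_{k+1})$ via a mimicking argument in the spirit of Proposition~\ref{prop2} and Claim~\ref{clm:bound}: starting from $s^{G}_{k+1}$ one replays the $\pi^*$ continuation from $s^{\pi}_{k+1}$, substituting each unit of BS~1 discharge-and-transfer by $\beta$ units of the surplus BS~2 discharge, which delivers the same energy to BS~2, and the residual surplus $\beta\delta(1-\alpha^2)/\alpha$ at BS~2 further offsets future $w_2$.

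Once transfer-first is in place, the remaining degrees of freedom at time $k$ are pinned down by earlier results. In the surplus sub-case, the residual $E_1(k)-|E_2(k)|/\beta$ must be stored, and Proposition~\ref{prop3} Case~1 forces local storage at BS~1 ahead of transfer-for-storage at BS~2. In the deficit sub-case $\beta E_1(k)<|E_2(k)|$, transfer-first forces $x_{12}(k)=E_1(k)$; the residual deficit is covered from storage with BS~2 preferred over BS~1 by Proposition~\ref{prop3} Case~2, and Corollary~\ref{coro1} rules out conventional energy while storage suffices. Observing additionally that $w_1(k)=0$ optimally (using $w_1$ to feed BS~2 costs $1$ per $\beta$ unit delivered, strictly dominated by $w_2$ at $1$ per $1$), these exchanges reproduce exactly the assignment of Case~2A; combining with the inductive hypothesis closes the step.

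The principal obstacle is making the transfer-first exchange fully rigorous. Unlike Proposition~\ref{prop4}, which leaned on $\beta>\alpha$ to ensure direct transfer dominates storage, here the inequality is driven entirely by the structural assumption $E_2(t)\le 0$, which guarantees extra BS~2 storage is productively discharged rather than left idle. The capacity constraint $s_i(t)\le S_{\rm max}$ introduces a secondary bookkeeping issue---the mimicking policy must route the substitution through BS~2 at epochs when capacity permits---but this is handled in the same spirit as the capacity accounting implicit in Proposition~\ref{prop2} and does not alter the conclusion.
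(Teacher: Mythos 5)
Your overall route is the paper's route: backward induction over one-step deviations, with the heart being the state-exchange inequality $J_{\pi^*}(s_1,s_2)\le J_{\pi^*}(s_1+\alpha\delta,\;s_2-\beta\delta/\alpha)$ proved by a mimicking argument (this is exactly the claim the paper proves inside Appendix~\ref{append:prop6}, where the invariant $s_2'(t)-s_2^*(t)\ge\beta\bigl(s_1^*(t)-s_1'(t)\bigr)$ does the capacity bookkeeping you allude to, using $E_2(t)\le 0$ to redirect any BS~2 charging back to BS~1 and $E_1(t)\ge 0$ to ensure BS~1 discharges are only ever destined for BS~2), together with Propositions~\ref{prop2},~\ref{prop3} and Corollary~\ref{coro1} to pin down the remaining choices.

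The genuine gap is the deficit sub-case when the shortfall created by a deviation exceeds what BS~2's storage can absorb. Your transfer-first exchange is only formulated for deviations that keep the current-step cost unchanged by raising $d_2$ by $\beta\delta/\alpha$; but a competing policy may bank $\delta$ at BS~1 and cover the resulting extra deficit $\beta\delta$ (partly) with conventional energy $w_2$ because $s_2$ is exhausted. Your appeal to Corollary~\ref{coro1} does not dispose of this: the grid energy there meets demand rather than charging storage, and quantitatively Corollary~\ref{coro1} only says that paying a full $\delta$ of grid energy per $\delta$ banked cannot help, whereas this deviation pays only $\beta\delta<\delta$. The tool that closes it is Claim~\ref{clm:bound} (valid since $E_1(t)\ge 0$): the future value of the extra $\alpha\delta$ held at BS~1 is at most $\alpha\beta\cdot\alpha\delta=\alpha^2\beta\delta<\beta\delta$, so such deviations lose at least $\beta(1-\alpha^2)\delta$. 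This is precisely why the paper's proof splits into the cases $\Delta/\alpha\le s_{g,2}(k+1)$ and $\Delta/\alpha> s_{g,2}(k+1)$, invoking Claim~\ref{clm:bound} in the second; your argument needs the same case split, and the exchange lemma itself needs the explicit charging-redirection step rather than the vaguer "route when capacity permits."
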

Proof of this Proposition is given in Appendix~\ref{append:prop6}. Proposition~\ref{prop6} essentially shows that energy transfer is always optimal if one BS is always in surplus while the other BS is always in deficit. The intuition behind this Proposition is the following observation. When $\beta \le \alpha$, it can be more efficient in general to compensate for any deficit from local storage at BS 2 than to transfer energy from BS 1, since storage at BS 2 can be recharged more efficiently in future time steps with $\beta \le \alpha$. However, when $E_2(t)$ is always non-positive, any charging of storage at BS 2 has to come from BS 1, which incurs a proportional loss of $1-\alpha \beta\ge 1-\alpha$, resulting in any discharging or charging being less efficient than energy transfer. 

\subsection{Numerical Results} \label{subsect:3c}
We now compare the greedy online algorithm to the optimal offline algorithm proposed in Section~\ref{sect:3}. Due to the lack of future information on the energy profiles, the greedy algorithm clearly cannot do as well as the offline algorithm, except under conditions discussed in Section~\ref{subsect:3b}. We compare the performance differences between the two algorithms when the conditions in Section~\ref{subsect:3b} are not satisfied. We adopt the same simulation setting as in Section~\ref{sect:3}.
\begin{figure}
\begin{center}
\includegraphics[width = 0.525\textwidth]{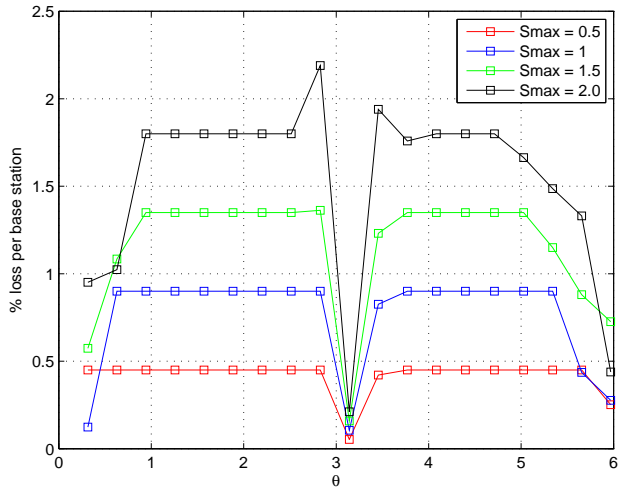}
\caption{Percentage energy loss of greedy online algorithm w.r.t. offline algorithm versus $\theta$} \label{fig0}
\vspace{-20pt}
\end{center}
\end{figure}

Fig.~\ref{fig0} shows the performance gap between the greedy online algorithm and the optimal offline algorithm that has access to the entire energy profile. Somewhat surprisingly, the greedy online algorithm suffers only a small loss (maximum of 2.3\%) compared to the offline algorithm under the sinusoidal energy profiles assumed in~\eqref{en1} and~\eqref{en2}. Note that the vertical axis shows the loss of the greedy online algorithm, which is the percentage increase in energy consumption with respect to the optimal offline algorithm over the \textit{entire} time horizon of $N$. Furthermore, when the energy profiles at the two BSs are anti-correlated ($\theta = \pi$), or highly positively correlated ($\theta$ small or close to $2\pi$), the percentage loss due to using an online algorithm is significantly lower, and the greedy online algorithm is almost as efficient as the offline algorithm. The offline algorithm, however, seems to be able to make more intelligence use of storage, resulting in generally higher energy saving when the the maximum amount of storage is increased. This suggests that greedy charge and discharge strategy may not be optimal in general, as can be expected.

\section{Hybrid model and algorithm} \label{sect:hyb}
In the previous two sections, we consider two cases for energy cooperation that can be thought of as being in two different extremes. In the first case, we assume that the energy profile is known entirely for the duration of interest and have shown that the optimal policy can be found through solving two linear programs. In the second scenario, we assume that we do not have any statistical information about the energy profile in future time steps. By specializing the linear programs in the first case to a single time step, we proposed a greedy online algorithm and analyzed the optimality properties of the greedy algorithm under certain conditions. 

Other than these two extremes, a more realistic scenario may be to assume that some, but not complete information, is known about the future energy profiles at the BSs. For example, it may be reasonable to assume that the energy profiles consist of a deterministic waveform in which small amount of random noise is added at each time step to model the prediction errors. In such a scenario, we can combine the proposed offline and online algorithms to arrive at a \textit{hybrid} algorithm to leverage on the available information about the energy profile. Essentially, we can use the offline algorithm to determine the policy for the deterministic portion of the energy profile and then use the greedy algorithm to compensate for any differences induced by the random noise. 

Most of the definitions in our model remain unchanged. The only change in our model is the information known at the BSs with regards to the energy profiles $E_1(t)$ and $E_2(t)$. More concretely, we assume that 
\begin{align*}
E_1(t) &= E_{1d}(t) + E_{1r}(t),\\
E_2(t) &= E_{2d}(t) + E_{2r}(t),
\end{align*}
with $E_{1d}(t)$ and $E_{2d}(t)$ known to the BSs for all $t$ and at time $t = k$, the BSs know $E_{1}(t)$ and $E_{2}(t)$ for $1 \le t \le k$. The proposed algorithm is presented in Algorithm~\ref{alg1}. 
\begin{algorithm}
\caption{Hybrid Algorithm for energy minimization}
\label{alg1}
\algsetup{indent=1.5em,
linenosize=\footnotesize}
\begin{algorithmic}[1]
\STATE {Input: $E_{1d}(t)$, $E_{2d}(t)$ for $1\le t \le N$, $E_{1}(t)$ and $E_{2}(t)$ for $1\le t \le k$}
\STATE {Solve linear program proposed in Algorithm~\ref{algo1} with $E_{1d}(t)$, $E_{2d}(t)$, $1 \le t \le N$ as inputs}
\STATE {Output: $\pi_{d}$, the optimal offline policy that minimizes energy consumption and maximizes storage at time $N$}
\STATE {From $\pi_d$, we obtain the storage states $s_{1d}(t)$ and $s_{2d}(t)$ for $1 \le t \le N$}
\FOR {$t = 1 \to N$}
\STATE {$S_{1g,max}(t) = S_{\rm max} - s_{1d}(t)$}
\STATE {$S_{2g,max}(t) = S_{\rm max} - s_{2d}(t)$}
\STATE {$E_{1g}(t) = E_1(t) - w_{1d}(t) + c_{1d}(t) - \alpha d_{1d}(t) - \beta x_{21,d}(t) + x_{12,d}(t)$}
\STATE {$E_{2g}(t) = E_2(t) - w_{2d}(t) + c_{2d}(t) - \alpha d_{2d}(t) - \beta x_{12,d}(t) + x_{21,d}(t)$}
\ENDFOR
\FOR {$t = 1 \to N$}
\STATE {Solve online energy minimization problem using Proposition~\ref{prop:algo} with $E_{1g}(t)$ and $E_{2g}(t)$ as inputs and $S_{1g,max}(t)$ and $S_{2g,max}(t)$ in place of $S_{\rm max}$ as maximum storage states for BS 1 and 2, respectively.}
\STATE {Output $\pi_g(t)$, the online greedy policy at time $t$}
\STATE {Output: Hybrid policy for time $t$, $\pi(t)$, which is given by, for $i \in \{1,2\}$,
\begin{align*}
w_{i}(t) &= w_{id}(t) + w_{ig}(t), \\
c_{i}(t) &= c_{id}(t) + c_{ig}(t), \\
d_{i}(t) &= d_{id}(t) + d_{ig}(t), \\
x_{12}(t) & = x_{12,d}(t) + x_{12,g}(t), \\
x_{21}(t) & = x_{21,d}(t) + x_{21,g}(t), \\
s_{i}(t) &= s_{id}(t) + s_{ig}(t).
\end{align*}
}
\ENDFOR
\end{algorithmic}
\end{algorithm}

This algorithm may be thought of as a superposition of the offline (Algorithm~\ref{algo1}) and online (Proposition~\ref{prop:algo}) algorithms. We first solve the offline algorithm using $E_{1d}(t)$ and $E_{2d}(t)$ for all $t$. We then compensate for the part of the energy profile that we do not know ($E_{1g}(t)$ and $E_{2g}(t)$) using the online algorithm. It should be noted that the storage available for the online algorithm has to be adjusted based on the amount of storage used in the offline algorithm, and this is done by defining a variable maximum stored energy, $S_{1g,max}(t)$ and $S_{2g,max}(t)$, at each time $t$ for each of the storages. This partitioning of storage for offline and online algorithms ensure that we can separate the offline and online energy minimization problems and then combine them back again to obtain the hybrid algorithm.

We use the following energy profiles for our numerical simulation. 
\begin{align*}
E_{1}(t) &= 5sin(wt) + 0.125 E_{1r}(t), \\
E_{2}(t) &= 5sin(wt + \theta) + 0.125 E_{2r}(t), \\
S_{\rm max} &= 3.5.
\end{align*}
The simulation period $t$ and $w$ used remain the same as in previous simulations. $E_{1r}(t)$ and $E_{2r}(t)$ are independent, identically distributed Gaussian random variables with zero mean and unit variance, denoted by $N(0,1)$ for all $t$. Figure~\ref{fighybrid} plots the percentage loss of two algorithms with respect to the optimal offline algorithm that has full knowledge of $E_1(t)$ and $E_2(t)$ for all $t$, versus $\theta$. The two algorithms are the greedy online algorithm and the hybrid algorithm.
\begin{figure}
\begin{center}
\includegraphics[width = 0.525\textwidth]{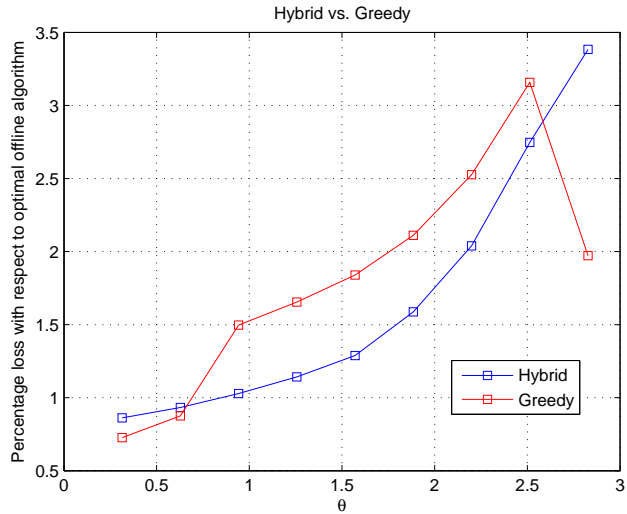}
\caption{Greedy online algorithm vs. Hybrid algorithm} \label{fighybrid}
\vspace{-20pt}
\end{center}
\end{figure}

As shown in Fig.~\ref{fighybrid}, the hybrid algorithm can outperform the greedy online algorithm for moderate values of $\theta$. This is to be expected, since the hybrid algorithm makes use of the knowledge of $E_{1d}(t)$ and $E_{2d}(t)$. When the $\theta$ is close to $0$ or $\pi$, the performance of the greedy online algorithm suffers little loss compared to the optimal offline algorithm, and the saving from knowing future values of $E_{1d}(t)$ and $E_{2d}(t)$ diminishes. This results in the greedy online algorithm being able to outperform the hybrid algorithm. 
\section{Conclusion and future directions}~\label{sect:5}
In this paper, we have proposed a model for energy cooperation between two cellular BSs with hybrid conventional and renewable energy sources, limited storages, and a connecting power line.  We consider two extreme cases. In the first case, we assume that the energy profile is known entirely for the duration of interest and have shown that the optimal policy can be found through solving a linear program. In the second scenario, we assume that we do not have any statistical information about the energy profile in future time steps. In this case, we have proposed a greedy online algorithm and analyzed the optimality properties of the greedy algorithm under some conditions. Numerical simulations comparing the offline and online algorithms were also carried out. In addition to these two extremes, another scenario is to assume that some, but not complete information, is known about the future energy profiles at the BSs. When the energy profiles consist of a deterministic waveform in which small amount of random noise is added at each time step to model the prediction errors, we proposed a hybrid algorithm that leveraged on the available information about the energy profile, and can be operated online. We compared the performance of the hybrid algorithm to the greedy online algorithm via simulations. The hybrid algorithm can outperform the online algorithm in some regimes by leveraging on the available (offline) information about the energy profiles.

Our model, while conceptually simple, can be extended in several different directions. We presented the model for two BSs in this paper, but we can readily extend the model and algorithms presented in this paper to multiple BSs. Another interesting extension would be to include pricing information into the model. A grid operator could charge different prices for conventional energy at different times of the day, and it is not difficult to extend our model to capture this pricing information. The algorithms, however, would need to change to incorporate the pricing information. 

\appendices
\section{Proof of Proposition~\ref{prop:greed_lin}} \label{appen:1a}
\begin{proof}
We will prove this proposition by contradiction. Let $w_1^*$, $w_2^*$, $s_1^* = s_1 + \alpha c_1^* -d^*_1$ and $s^*_2=s_2 + \alpha c_2^* -d^*_2$ be the optimal values found by the original greedy algorithm. Let $w'_1$, $w'_2$, $s'_1$ and $s'_2$ be the optimal values found through solving the linear program in Proposition~\ref{prop:greed_lin}. Let $V_2 = w'_1 + w'_2 \ge V_1$ (since $V_1$ is the minimal energy possible). If $V_1 = V_2$, then it is clear that $s^*_1+ s^*_2 = s'_1 + s'_2$. Hence, we need only to consider the case when $V_2 > V_1$. We have the following two cases.

\textit{Case 1:} $w_1^* \ge w_1'$ and $w_2^* \ge w_2'$. In this case, observe that the excess energy $V_2 - V_1$ can at most be used to increase the storage levels ($s_1^* + s_2^*$) by $(V_2- V_1)/\alpha\beta$, through removing the need for the storage of one base station to discharge to compensate for a deficit at the other base station. Hence, $s_1' + s_2' \le (s_1^* + s_2^*) + (V_2 - V_1)/\alpha\beta$ and
\begin{align*}
(w'_1 + w'_2) - \gamma (s'_1 + s'_2) &\ge (w^*_1 + w^*_2) - \gamma (s^*_1 + s^*_2) + (V_2 - V_1)(1 - \frac{\gamma}{\alpha\beta}) \\
&> (w^*_1 + w^*_2) - \gamma (s^*_1 + s^*_2).
\end{align*}
The last line follows from $\gamma < \alpha\beta$ and $(V_2 - V_1)>0$. Since  $w_1^*$, $w_2^*$, $s_1^*$ and $s^*_2$ are feasible for the linear program in the Proposition, this inequality contradicts the assumption that $w'_1$, $w'_2$, $s'_1$ and $s'_2$ are optimal. 

\textit{Case 2:} $w_1^* > w'_1$ and $w_2^* < w'_2$, with $w'_2 - w_2^* > w_1^* - w'_1$ since $V_2 > V_1$. In this case, we first note the following observations.
\begin{itemize}
\item Before any energy transfer, there is a deficit of $w_1^* - w'_1$ at BS 1. Otherwise, we can reduce $w_1^*$ without incurring any deficit and maintain the same $w_2^*$, which contradicts the fact that $V_1 = w_1^* + w_2^*$ is the minimum energy required from the main grid. 
\item Energy drawn from the main grid at BS 2 is never used to compensate for the deficit at BS 1. This is because one can achieve a smaller $w_1 + w_2$ by simply drawing the required energy from BS 1 and not incur the transfer cost of $(1-\beta)$ that occurs when energy is drawn from BS 2 and transferred to BS 1. 
\item The energy required to compensate for the deficit $w_1^* - w'_1$ can only come from the storage of BS 2. If storage of BS 1 is used to compensate for part of the deficit, then it means that we can achieve a $w_1 < w_1^*$ by using storage while maintaining the same $w_2^*$ (from the previous observation, none of the excess energy $w'_2 - w_2^*$ is used to compensate for the deficit at BS 1). This contradicts the fact that $w_1^* + w_2^*$ is the minimum energy required from the main grid.
\item Hence, the storage level at BS 2 must fall by $(w^*_1 - w_1')/\alpha\beta$ from $s_2^*$. 
\item On the other hand, using the same arguments as in case 1, the excess energy at BS 2 can at most lead to an increase of $(w'_2 - w_2^*)/\alpha\beta$ for the storage level at BS 1. 
\end{itemize} 

We therefore have the same inequality as case 1: $(w'_1 + w'_2) - \gamma (s'_1 + s'_2) > (w^*_1 + w^*_2) - \gamma (s^*_1 + s^*_2)$. The rest of the proof follows the same arguments as case 1. 
\end{proof}

\section{Proof of Proposition~\ref{prop:algo}} \label{appen:1}
\begin{proof}
\textit{Case 1}: $E_1\ge 0$ and $E_2 \ge 0$. In this case, clearly, $w = 0$ and both BSs try to store as much of the net energy as possible. That is, for $i \in \{1,2\}$, we first carry out the following
\begin{align*}
c_{i} &= \min\{(S_{\rm max} - s_i)/\alpha, E_i\}, \\
s_i &\leftarrow s_i + \alpha c_i.
\end{align*}
If both $s_1 = s_2 = S_{\rm max}$ or $s_1, s_2 < S_{\rm max}$, this case terminates. Otherwise, if $s_2 < S_{\rm max}$ and $s_1 = S_{\rm max}$, BS 1 transfers energy to BS 2 for storage. That is, we set
\begin{align*}
x_{12} &= E_1 - c_1, \\
c_2' &= \min\{ \beta x_{12}, (S_{\rm max} - s_2)/\alpha\}, \\
s_2 &\leftarrow s_2 + \alpha c_2',\\
c_2 &\leftarrow c_2 + c_2'.
\end{align*}

Similarly, if if $s_1 < S_{\rm max}$ and $s_2(t) = S_{\rm max}$, the roles of BSs 1 and 2 in the above are reversed.\\

\textit{Case 2}:  $E_1 \ge 0$ and $E_2 < 0$. This is the more complicated case that needs to be split into four sub-cases. 

\textit{Case 2.1}: $|E_2| \ge \beta E_1 + \alpha s_2$. In this case, all the energy is transferred to overcome the deficit. Here, we set $x_{12}' = E_1$ and $d_2 = s_2$. Set $E_2' = E_2 + \beta x_{12}' + \alpha d_2$. Next, set $d_1 = \min\{s_1, |E_2'|/\alpha\beta\}$ and $x_{12} = \alpha d_1 + E_1$. Set $w_2 = E_2+ \beta x_{12} + \alpha d_2$. \\

\textit{Case 2.2}: $\beta E_1 \le |E_2| < \beta E_1 + \alpha s_2$. In this case, $w_2 = 0$ since all the deficit can be compensated for by energy transfer and storage. The deficit is compensated for by a mixture of energy transfer and storage charge and discharge that maximizes the storage levels. We first note in this case the following simple claim.
\begin{claim} \label{clm:et}
If $\beta x_{12} \le |E_2|$ and $w_2 = 0$, then no charging occurs at BS 2; i.e. $c_2 = 0$.
\end{claim}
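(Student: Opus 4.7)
The plan is to derive the claim directly from the BS~2 energy-neutralization constraint together with the structural property that any optimal solution of the greedy linear program satisfies $c_i d_i = 0$ (as noted following Theorem~\ref{thm:1}). Dropping the time index in this single-snapshot setting, the BS~2 constraint reads
\begin{align*}
E_2 + w_2 - c_2 + \alpha d_2 - x_{21} + \beta x_{12} \ge 0.
\end{align*}
Substituting the hypotheses $w_2 = 0$ and $\beta x_{12} \le |E_2|$ (equivalently $E_2 + \beta x_{12} \le 0$, using $E_2 < 0$ from Case~2) and rearranging gives
\begin{align*}
\alpha d_2 \ge c_2 + x_{21} \ge c_2.
\end{align*}

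Hence, if $c_2 > 0$, the inequality forces $d_2 > 0$, contradicting $c_2 d_2 = 0$, and we conclude $c_2 = 0$. If one prefers not to invoke the remark after Theorem~\ref{thm:1}, the same conclusion follows from a direct exchange argument: for any feasible $(c_2, d_2)$ with $c_2, d_2 > 0$, letting $\delta = \min\{\alpha c_2, d_2\}$ and replacing $(c_2, d_2)$ by $(c_2 - \delta/\alpha,\, d_2 - \delta)$ leaves the storage update $\alpha c_2 - d_2$ unchanged, only relaxes the slack in the BS~2 balance (since $\delta(1/\alpha - \alpha) \ge 0$ for $\alpha \le 1$), and keeps the combined greedy objective $(w_1+w_2) - \gamma \mathbf{1}^T(s + \alpha c - d)$ unchanged; iterating this swap drives $c_2 = 0$ without increasing the objective.

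There is no substantive obstacle here: the argument reduces to a single rearrangement of the BS~2 energy balance, with the standard exclusion of simultaneous charge and discharge doing the remaining work. The only point requiring care is being explicit about the source of the $c_2 d_2 = 0$ structure — either the remark following Theorem~\ref{thm:1} or the explicit exchange above — since the greedy linear program itself does not enforce this as a hard constraint.
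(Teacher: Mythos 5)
Your proof is correct, but it reaches the claim by a different route than the paper. The paper argues suboptimality directly: if $\delta$ units of the transferred energy are charged rather than used against the deficit, then (because $\beta x_{12}\le|E_2|$ and $w_2=0$) BS~2 must discharge an extra $\delta/\alpha$ to stay balanced, so the net storage change is $\alpha\delta-\delta/\alpha\le 0$, which is (weakly) worse for the storage-maximization stage of the greedy policy; hence $c_2=0$ without loss of optimality. You instead read the conclusion off the energy-neutralization constraint itself: with $w_2=0$ and $E_2+\beta x_{12}\le 0$ the constraint forces $\alpha d_2\ge c_2+x_{21}+(|E_2|-\beta x_{12})\ge c_2$, so any charging entails simultaneous discharging, which you then eliminate via the no-simultaneous-charge/discharge normalization --- either by citing the remark after Theorem~\ref{thm:1} or, more self-containedly, by your exchange $(c_2,d_2)\mapsto(c_2-\delta/\alpha,\,d_2-\delta)$, which preserves $\alpha c_2-d_2$, all feasibility constraints, and the greedy objective. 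The two arguments rest on the same quantitative fact (charging costs a factor $\alpha$ while the forced extra discharge costs $1/\alpha$), but yours is cleaner in that it needs no bookkeeping about where the charged energy ``comes from,'' while the paper's makes the storage penalty explicit and ties it directly to the second-stage objective. Two small points of precision: your one-line version should not lean on the remark after Theorem~\ref{thm:1} alone, since that remark concerns the offline LP and is itself only asserted (your exchange argument fills this in, so keep it); and a single application of the swap with $\delta=\min\{\alpha c_2,d_2\}$ already zeroes one of $c_2,d_2$, so no iteration is needed. Note also that, exactly as in the paper, when $\alpha=1$ the conclusion is ``without loss of optimality'' rather than forced, which is all the claim requires.
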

\begin{proof}
Suppose for the sake of contradiction that $\delta$ units of the transferred energy ($\beta x_{12}$) is used to charge the storage at BS 2 instead of being used to cancel out the deficit. Then, since $\beta x_{12} \le |E_2|$ and $w_2 = 0$, BS 2 must compensate for an additional $\delta$ units of deficit by discharging $\delta/\alpha$ units of energy from storage. The storage level therefore drops by $\delta/\alpha$. On the other hand, the $\delta$ units of energy can only increase the storage levels by $\alpha \delta$. The net change in storage level is therefore $\alpha \delta -\delta/\alpha \le 0$, which is sub-optimal compared to the case where the $\delta$ units of energy is simply used for canceling the deficit. 
\end{proof}

Let $\Delta \ge 0$ be the amount of energy sent to storage at BS 1. We have the following constraints on $\Delta$.
\begin{align*}
\Delta &\le E_1, \\
\Delta &\le \frac{S_{\rm max} - s_1}{\alpha}, \\
|E_2| -\beta (E_1 - \Delta)  &\le \alpha s_2. 
\end{align*}
The last constraint comes from the fact that we need to transfer enough energy to ensure that $w_2 = 0$. The change in sum storage levels is then given by $\alpha \Delta - (|E_2| - \beta E_1 + \beta \Delta)/\alpha = (\beta E_1 - |E_2|)/\alpha + (\alpha -\beta/\alpha) \Delta$. Hence, if $\beta \ge \alpha^2$, $\Delta = 0$, and if $\beta < \alpha^2$, $\Delta$ takes its maximum possible value. Hence, the policy is given as follow.
\begin{itemize}
\item If $\beta \ge \alpha^2$: 
\begin{align*}
x_{12} &= E_1, \\
d_2 &= (|E_2| - \beta E_1)/\alpha, \\
s_1 &\leftarrow s_1, \\
s_2 &\leftarrow s_2 - d_2.
\end{align*}

\item If $\beta < \alpha^2$:{\allowdisplaybreaks
\begin{align*}
c_1 &= \min\{E_1, (S_{\rm max} - s_1)/\alpha, E_1 -(|E_2|- \alpha s_2)/\beta\},\\
x_{12} &= E_1 -c_1, \\
d_2 &= \max\{0, (|E_2| - \beta x_{12})/\alpha\}, \\
s_1 &\leftarrow s_1 + \alpha c_1, \\
s_2 &\leftarrow s_2 - d_2.
\end{align*}}

\end{itemize} 

\textit{Case 2.3}: $\beta(E_1 - (S_{\rm max}-s_1)/\alpha)\le |E_2| < \beta E_1$. In this case, $w_2$ is again equal to zero. We next note the following observation.
\begin{itemize}
\item Let $\beta x_{12} \le |E_2|$, then no charging of the storage at BS 2 occurs. This observation follows directly from claim~\ref{clm:et}.
\item If $\beta x_{12} > |E_2|$, then charging of storage at BS 2 occurs and the charge is $ \min\{(S_{\rm max} -s_2)/\alpha, \beta x_{12} - |E_2|\}$. This observation follows similar arguments to claim~\ref{clm:et}. It is always more efficient to use the transferred energy to cancel out the deficit than to charge the storage. 
\end{itemize}

Next, note that the following hold true.
\begin{align}
\alpha c_1 &\le S_{\rm max} -s_1, \\
|E_2| - \beta x_{12} &\le \alpha s_2.\label{eq:casegreedy2}
\end{align}
The first inequality follows from the storage constraint at BS 1. The second follows from the fact that the deficit must be canceled out completely by a combination of energy transfer and storage discharge. Now, we assume without loss of generality that $c_1 = E_1 - x_{12}$. That is, any excess energy is transferred to BS 2. This gives us the inequality
\begin{align}
x_{12} \ge E_1 - (S_{\rm max} -s_1)/\alpha. \label{eq:casegreedy} 
\end{align}

The net change in storage level, $\Delta_S$, is given by
\begin{align*}
 \Delta_S = \alpha (E_1 - x_{12}) - \frac{1}{\alpha}(|E_2| - \beta x_{12})\mathbf{1}_{\beta x_{12} \le |E_2|} + \mathbf{1}_{\beta x_{12} > |E_2|}\alpha \min\{(S_{\rm max} -s_2)/\alpha, \beta x_{12} - |E_2|\}, 
\end{align*} 
where $\mathbf{1}_{(.)}$ is the indicator function and $x_{12}$ satisfies~\eqref{eq:casegreedy} and~\eqref{eq:casegreedy2}. Consider now the case where $\beta \ge \alpha^2$. 

If $\beta x_{12} \le |E_2|$, $\Delta_S$ is an increasing function of $x_{12}$ and the maximum increase in storage level is $ \alpha (E_1 - |E_2|/\beta)$. 

On the other hand, if $\beta x_{12} \ge |E_2|$, then we have $x_{12} \ge \max\{|E_2|/\beta,  E_1 - (S_{\rm max} -s_1)/\alpha\}$ and $\Delta_S$ is a decreasing function of $x_{12}$. Hence, the maximum increase in storage level is $\alpha (E_1 - |E_2|/\beta)$ if $E_1 -(S_{\rm max} -s_1)/\alpha \le |E_2|/\beta$ and $S_{\rm max}-s_1+ \alpha \min\{(S_{\rm max} -s_2)/\alpha, \beta(E_1 -(S_{\rm max}-s_1)/\alpha) - |E_2|\}$ otherwise. 

In summary, in this case, we always transfer energy to compensate for all the deficit first before charging storage 1 followed by storage 2. Hence, if $\beta \ge \alpha^2$, the optimal policy is given by{\allowdisplaybreaks
\begin{align*}
x_{12} &= |E_2|/\beta, \\
c_1 &= \min\{(S_{\rm max}-s_1)/\alpha, E_1 - x_{12}\}, \\
s_1 &\leftarrow s_1 + \alpha c_1, \\
c_2 &=\min\{(S_{\rm max} -s_2)/\alpha, \beta (E_1 - x_{12} - c_1)\},\\
x_{12} & \leftarrow x_{12}+ c_2/\beta, \\
s_2 &\leftarrow s_2 + \alpha c_2.
\end{align*}
}
If $\beta < \alpha^2$, then $\Delta_{S}$ is a always a decreasing function of $x_{12}$. Hence, one sets $x_{12}$ as small as possible, subject to~\eqref{eq:casegreedy} and~\eqref{eq:casegreedy2}. 
That is, we set $x_{12} = \max\{(|E_2| - \alpha s_2)/\beta, E_1 - (S_{\rm max} -s_1)/\alpha\}$. In summary, for this case, the optimal policy is given by
\begin{align*}
x_{12} &= \max\{(|E_2|- \alpha s_2)/\beta, E_1- (S_{\rm max} -s_1)/\alpha, 0\}, \\
c_1 &= \min\{(S_{\rm max}-s_1)/\alpha, (E_1 - x_{12})\}, \\
d_2 &= \max\{(|E_2| - \beta x_{12})/\alpha, 0\}, \\
c_2 &= \mathbf{1}_{d_2 = 0}\min\{(\beta x_{12} - |E_2|), (S_{\rm max}-s_2)/\alpha\}, \\
s_2 &\leftarrow s_2 + \alpha c_2 - d_2,\\
s_1 &\leftarrow s_1 + \alpha c_1.
\end{align*}

\textit{Case 2.4}: $|E_2| < \beta(E_1 - (S_{\rm max}-s_1)/\alpha)$. This case is straightforward. The excess energy is enough to address the deficit at BS 2 as well as charge the storage at BS 1 to $S_{\rm max}$. From claim~\ref{clm:et}, we also see that the transferred energy from BS 1 is always used first to compensate for the deficit before charging the storage at BS 2. The optimal policy is then given as follow.{\allowdisplaybreaks
\begin{align*}
c_1 & = (S_{\rm max} -s_1)/\alpha, \\
x_{12} & = E_1 - c_1, \\
c_2 &= \min\{(\beta x_{12} - |E_2|), (S_{\rm max}-s_2)/\alpha\}, \\
s_2 &\leftarrow s_2 + \alpha c_2,\\
s_1 &\leftarrow s_1 + \alpha c_1.
\end{align*}}

Finally, from combining all four sub-cases, if $\beta \ge \alpha^2$, the optimal policy can be reduced to the form stated in Case 2A of Proposition~\ref{prop:algo}. When $\beta < \alpha^2$, the optimal policy reduces to Case 2B of  Proposition~\ref{prop:algo}. 

Next, for Case 3 of Proposition~\ref{prop:algo}:  $E_1 < 0$, $E_2 \ge 0$ and $\beta \ge \alpha^2$. This case is symmetric to Case 2, with the roles of BSs 1 and 2 reversed. We therefore omit theproof here and refer readers to proof for Case 2 above. 

\textit{Case 4}:  $E_1 < 0$ and $E_2 < 0$. In this case, each BS compensates using individual storage first, before helping the other, since $\alpha \ge \alpha \beta$. Hence, it is less efficient to use the storage of the other BS to compensate for deficit if there is still storage in the current BS. Therefore, for $i\in\{1,2\}$, we let 
\begin{align*}
d_i &= \min\{s_i, |E_i|/\alpha\}, \\
E'_i &= E_i + \alpha d_i, \\
s_i &\leftarrow s_i - d_i.
\end{align*}
If either $E'_1 \ge 0$ or $E'_2 \ge 0$, the algorithm reduces to the first three cases with net energy profiles being $E'_1$ and $E'_2$. If both $E'_1<0$ and $E'_2<0$, we compensate with conventional energy generation and set $w_i = |E'_i|$.
\end{proof}
\section{Proof of Proposition~\ref{prop2}} \label{append:prop2}
\begin{proof}
The proof follows from the observation that a system with $s(t) \le s'(t)$ component-wise can mimic the optimal policy of a system at state $s'(t)$ using conventional energy. 

Let $\pi^*(s'(k))$ denote an optimal policy when the state is at $s'(k)$, and $(w^*(t), c^*(t), d^*(t), x^*_{12}(t), x^*_{21}(t))$ denote the control variables induced by the energy profile and optimal policy for $t \ge k$. We also use $s^*(t)$ to denote the evolution of the state under the optimal policy, starting from $s'(k)$. 

Let $\pi(s(k))$ denote a control policy when the state is at $s(k)$, and $(w(t), c(t), d(t), x_{12}(t), x_{21}(t))$ denote the control variables induced by the energy profile and optimal policy for $t \ge k$. We use $s(t)$ to denote the evolution of the state under the policy, starting from $s(k)$. Now, we set $\pi = \pi^*$ except when  $d_i^*(t) > s_i(t)$ for any $i\in \{1,2\}$. In this case, we set 
\begin{align}
d_i(t) &= s_i(t), \label{eq:3}\\
 w_i(t) &= w_i^*(t) + \alpha (d_i^*(t) - s_i(t)). \label{eq:2} 
\end{align}
Observe now that as the optimal policy $\pi^*$ satisfies the energy constraints at each time $t$, $\pi$ also satisfies the energy constraints through compensating for any additional discharge under the optimal policy using conventional energy (the term $\alpha (d_i^*(t) - s_i(t))$ in~\eqref{eq:2}). For the storage constraints, observe that the discharging constraints are taken care of by~\eqref{eq:3} and~\eqref{eq:2}. As for the charging constraints, observe that since $s(k) \le s'(k)$ component-wise, and the discharging policy in~\eqref{eq:3} and~\eqref{eq:2} still results in $s(t) \le s^*(t)$ for all $t\ge k$, the optimal charging policy ($c^*(t)$) can be accommodated under $\pi$. 
Furthermore, we have for $i\in{1,2}$
\begin{align*}
s^*_{i}(t+1) - s_{i}(t+1)& \le s^*_{i}(t) - s_{i}(t) - 1_{d^*_i(t) \ge s_i(t)} (d^*_i(t) - s_i(t)),
\end{align*}
where $1_{\{.\}}$ denotes the indicator function. Since $s^*_{i}(N) - s_{i}(N) \ge 0$, we have 
\begin{align}
\sum_{t = k}^N1_{d^*_i(t) \ge s_i(t)} (d^*_i(t) - s_i(t)) \le s'_{i}(k) - s_{i}(k). \label{eq:4} 
\end{align}
We also have from~\eqref{eq:2}
\begin{align}
 \sum_{t = k}^Nw_i(t) &= \sum_{t = k}^N (w_i^*(t) + \alpha1_{d^*_i(t) \ge s_i(t)} (d_i^*(t) - s_i(t))). \label{eq:5}
\end{align}
\eqref{eq:4} and \eqref{eq:5} implies that
\begin{align*}
&J_{\pi}(s(k)) - J_{\pi^*}(s'(k))\\
 &= \alpha\sum_{t = k}^N \left[\begin{array}{c}1_{d^*_1(t) \ge s_1(t)} \\ 1_{d^*_2(t) \ge s_2(t)}\end{array}\right]^T(d^*(t) - s(t)) \\
& \le \alpha \sum_{i=1}^2(s'_i(k) - s_i(k)).
\end{align*}
Hence,
\begin{align*}
J_{\pi^*}(s(t)) &\le J_{\pi}(s(t))\\
& \le J_{\pi^*}(s'(k)) + \alpha \sum_{i=1}^2(s'_i(k) - s_i(k)).
\end{align*}
\end{proof}

\section{Proof of Proposition~\ref{prop3}} \label{pf:storage}
\begin{proof}
Both cases 1 and 2 of Proposition~\ref{prop5} use the same lines of argument. To avoid repetition, we only give the proof for Case 1. Consider a system at state $s'(k) = [s_1(k),  s_2(k)+ \alpha\beta \Delta,]^T$. Let $\pi^*(s'(k))$ be the optimal policy corresponding to this state. Consider now a system at state $s''(k) = [s_1(k)+ \alpha \Delta,  s_2(k)]^T$. Let $\pi(s''(k)) = \pi^*(s'(k))$ except in the following cases:
\begin{enumerate}
\item Deficit at BS 2: $d_2^*(t) > s_2''(t)$. In this case, the deficit is given by $\alpha (d^*_2(t) -s_2''(t))$. We will compensate for this deficit by transferring energy from storage 1 (at BS 1) to BS 2. The total amount that needs to be transfer out of storage 1 is $\alpha(d_2^*(t) - s_2''(t))/(\alpha\beta)= (d_2^*(t) - s_2''(t))/\beta$.
\item Overcharging at BS 1: $\alpha c_1^*(t) > S_{\rm max} - s_1''(t) $. In this case, we set $\hat{c}_1(t) = (S_{\rm max} - s_1''(t))/\alpha$ to charge storage 1 and transfer $\min\{c^*_1(t) - \hat{c}_1(t), (s'_2(t) - s_2''(t))/(\alpha\beta)\}$ to charge storage 2.
\end{enumerate}
In the first case, the total amount of energy that can be transfer from BS 1 to compensate for any deficit at BS 2 is $\alpha^2 \beta \Delta$, whereas the maximum amount of deficit that we incur is at most $\alpha^2\beta \Delta$. Observe also that in neither cases do we need to use additional energy from the grid to compensate for any deficits.  

In the second case, observe that the total amount of excess energy transferable from BS 1 is $\alpha\beta \Delta/\alpha = \beta \Delta$. This amount of energy leads to an increase in storage level of $\alpha \beta \Delta$. Since the gap between $s'_2(t)$ and $s''_2(t)$ is at most $\alpha\beta \Delta$, this energy transfer policy can be used to compensate for the gap in storage level at storage 2.

More formally, we have the following claim 
\begin{claim}~\label{clm:store}
For evolution of state $s''(t)$ under policy $\pi$ and evolution of $s'(t)$ under $\pi^*$, and the same energy profiles, we have for $t \ge k$, 
\begin{align}
s''_1(t) &\ge s_1'(t), \label{p3:eq1}\\
s_2'(t) &\ge s''_2(t), \label{p3:eq2}\\
s_2'(t) - s''_2(t) &\le \beta(s''_1(t) - s_1'(t)) \label{p3:eq3} 
\end{align}
\end{claim}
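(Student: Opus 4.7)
The plan is to prove all three inequalities simultaneously by induction on $t \ge k$. Introduce the shorthand $A(t) := s''_1(t) - s'_1(t)$ and $B(t) := s'_2(t) - s''_2(t)$, so the claim becomes $A(t) \ge 0$, $B(t) \ge 0$, and $B(t) \le \beta A(t)$. The base case $t=k$ is immediate: direct substitution gives $A(k) = \alpha\Delta$, $B(k) = \alpha\beta\Delta$, and $B(k) = \beta A(k)$, with equality in the third relation. For the inductive step, the hypothesis already rules out the symmetric failure modes: a deficit at BS 1 is impossible because $d_1^*(t) \le s'_1(t) \le s''_1(t)$, and an overcharge at BS 2 is similarly ruled out by $s''_2(t) \le s'_2(t)$ together with feasibility of $c_2^*(t)$ at $s'$. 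Consequently the only times $\pi$ deviates from $\pi^*$ are the deficit-at-BS-2 and overcharge-at-BS-1 modifications described above.

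I then analyze the two modifications separately (they may occur simultaneously at different BSs, but their effects on $A$ and $B$ combine additively, so modular analysis suffices). In the deficit case, write $\delta := d_2^*(t) - s''_2(t) > 0$. The extra discharge of $\delta/\beta$ from storage 1 plus the corresponding transfer delivers exactly $\alpha\delta$ units to BS 2, canceling the deficit, and yields $A(t+1) = A(t) - \delta/\beta$ and $B(t+1) = B(t) - \delta$. All three inequalities at time $t+1$ then follow from the chain $\delta \le B(t) \le \beta A(t)$, with the third reducing cleanly to the induction hypothesis. In the overcharging case, set $\Delta_1 := c_1^*(t) - \hat c_1(t) > 0$ and $T := \min\{\Delta_1,\, B(t)/(\alpha\beta)\}$. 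A short calculation of the state updates gives $A(t+1) = A(t) - \alpha\Delta_1$ and $B(t+1) = B(t) - \alpha\beta T$; the definition of $T$ yields $B(t+1) \ge 0$ immediately.

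The main obstacle is verifying $A(t+1) \ge 0$ and $B(t+1) \le \beta A(t+1)$ in the overcharging case when $T < \Delta_1$, because the induction hypothesis does not directly supply $A(t) \ge \alpha\Delta_1$. The resolution is to invoke the standing WLOG structural property of the optimal policy that $c_1^*(t) d_1^*(t) = 0$: since $c_1^*(t) > 0$ here, one has $d_1^*(t) = 0$, which gives $A(t+1) = S_{\rm max} - s'_1(t) - \alpha c_1^*(t) = S_{\rm max} - s'_1(t+1) \ge 0$ by feasibility of $\pi^*$ at $s'$. When $T < \Delta_1$ the minimum forces $\alpha\beta T = B(t)$, so $B(t+1) = 0$ and the third inequality is trivial; when $T = \Delta_1$, both updates subtract matching quantities and the bound $B(t) \le \beta A(t)$ propagates. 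That the mimicking policy $\pi$ may itself violate $c_1 d_1 = 0$ is harmless because only the $\ge 0$ energy-balance inequalities need to be preserved, and these are preserved by construction.
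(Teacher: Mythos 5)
Your proof is correct and follows essentially the same route as the paper's: induction on $t$ with the same base case, the same two deviation modes (deficit at BS~2 and overcharge at BS~1), the same update equations $A(t+1)=A(t)-\delta/\beta$, $B(t+1)=B(t)-\delta$ and $A(t+1)=A(t)-\alpha\Delta_1$, $B(t+1)=B(t)-\alpha\beta T$, and the same use of the induction hypothesis $B(t)\le\beta A(t)$. The only difference is that you make explicit two points the paper leaves implicit (the $c_1^*d_1^*=0$ structural property and the argument $A(t+1)=S_{\rm max}-s_1'(t+1)\ge 0$ behind the paper's $\max\{0,\cdot\}$ step), which is a welcome clarification but not a different proof.
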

\begin{proof}
This set of inequalities are clearly true for $t = k$. We now show by induction that they are also true for all $ t\ge k$. Assume that the set of inequalities are true at time $t$. We now consider the two scenarios listed that result in a change in the difference of storage levels $s''_1(t) - s_1'(t)$ and $s_2'(t) - s''_2(t)$. 

If $d_2^*(t) > s''_2(t)$: The difference is first compensated by storage at BS 1, resulting in a drop in storage level for BS 1. Note that since $s_2'(t) \ge d_2^*(t)$, we have that $s_2'(t) - s_2''(t) \ge d_2^*(t) - s_2''(t)$. By the induction hypothesis, we therefore have $s_1(t) - s_1'(t) \ge  (d_2^*(t) - s_2''(t))/\beta$. Hence, we can compensate for the deficit by discharging from BS 1\footnote{The deficit is at BS 2 $\alpha (d_2^*(t) - s_2''(t))$. We compensate by discharging $(d_2^*(t) - s_2''(t))/\beta$ at storage at BS 1, resulting in a net energy of $\alpha (d_2^*(t) - s_2''(t))$ after storage and energy transfer loss. \label{fn2}}. The drop in storage levels at BS 1 is then given by
\begin{align}
s_1''(t+1) - s_1'(t+1) &= s_1''(t) - s_1'(t) - \frac{(d_2^*(t)- s_2(t))}{\beta}. \label{eq:spf1}
\end{align}
For BS 2, we have that
\begin{align}
s_2'(t+1) - s_2''(t+1) &= s_2'(t) - s_2''(t) - (d_2^*(t)- s_2''(t)). \label{eq:spf2}
\end{align}
Inequalities~\eqref{eq:spf1} and~\eqref{eq:spf2} imply that
\begin{align*}
s_2'(t+1) - s_2''(t+1) &\le \beta(s_1''(t+1) - s_1'(t+1)). 
\end{align*}

Next, consider the case when charging occurs at BS 2 (and no discharging occurs). By the induction hypothesis, $s_2'(t) \ge s_2''(t)$. Hence, $s_2'(t) - s_2''(t) = s_2'(t+1) - s_2''(t+1)$ unless excess charge is transfered over from BS 1. That is, unless $\alpha c_1^*(t) > S_{\rm max} - s_1''(t)$. In that case, we have
\begin{align*}
s_2'(t+1) - s_2''(t+1) &= \max\{0, s_2'(t) - s_2''(t) - \alpha\beta(c^*_1(t) - \hat{c}_1(t))\}. 
\end{align*}
Hence, we have $s_2'(t+1) \ge s_2''(t+1)$ and from the induction hypothesis,
\begin{align*}
s_2'(t+1) - s_2''(t+1) &\le \max\{0, \beta(s_1''(t) - s_1'(t)) - \alpha\beta(c^*_1(t) - \hat{c}_1(t))\} \\
& =  \beta(s_1''(t) - s_1'(t)) - \alpha\beta(c^*_1(t) - \hat{c}_1(t)) \\
& = \beta (s_1''(t+1) - s_1'(t+1)).
\end{align*}
\end{proof}
From~\eqref{p3:eq3}, it is clear that $J_{\pi}(s''(k)) \le J_{\pi^*}(s'(k))$ since any additional deficit that occurs at BS 2 can be compensated for through discharge from storage of BS 1 (see footnote~\ref{fn2}). Hence, we have
\begin{align*}
J_{\pi^*}(s''(k)) & \le J_{\pi}(s''(k))\\
&= J_{\pi}\left(\left[\begin{array}{c}s_1(k) + \alpha \Delta \\ s_2(k)\end{array}\right]\right) \\
&\le J_{\pi^*}(s'(k)) \\
& = J_{\pi^*}\left(\left[\begin{array}{c}s_1(k) \\ s_2(k) + \alpha \beta \Delta\end{array}\right]\right),
\end{align*}
which completes the proof of this Proposition. 
\end{proof}
\section{Proof of Proposition~\ref{prop4}} \label{pf:transfer}
\begin{proof}
We note that at time $t = N$, transferring $\min\{E_1(N), |E_2(N)|/\beta\}$ units of energy from BS 1 to BS 2 first is optimal. Hence, it remains to show that, for $t<N$, the cost to go function for a policy that transfers $\min\{E_1(t), |E_2(t)|/\beta\}$ to compensate for $E_2(t)$ is optimal. We first note that for the energy sent to BS 2, it is optimal to use all transferred energy to compensate for the deficit $E_2(k)$ first, rather than sending the energy to storage. This follows from Proposition~\ref{prop2}. Let $\Delta_2$ be the energy at BS 2 that comes from BS 1, and let $\Delta_{S2} \le \Delta_2$ be the part of the energy that is sent to storage at BS 2 instead of being used to compensate for the deficit. If $\Delta_{S2} >0$, then charging of storage 2 occurs and we can assume without loss of generality that no discharging occurs. Hence, the deficit $|E_2(k)|$ must be compensated for by the remaining transfer energy, $\Delta_2 - \Delta_{S2}$, and energy drawn from the main grid, $w_2^*(k)$. From Corollary~\ref{coro1},  we can assume that $w_2^*(k)$ is not used to charge the storage at BS 2. Hence, we have $w_2^*(k) = \max\{|E_2|- \Delta_2 + \Delta_{S2}, 0\}$. If $\Delta_{2} -\Delta_{S2} \ge |E_2(k)|$, then it means that we compensate for the deficit first before charging the storage. If $\Delta_{2} -\Delta_{S2} < |E_2(k)|$, then the optimal cost to go function is lower bounded by
{\allowdisplaybreaks
\begin{align*}
J_{\pi^*}\left(\left[\begin{array}{c}s_1(k)\\ s_2(k)\end{array}\right]\right) &= w_1^*(k) +|E_2|- \Delta_2 + \Delta_{S2} + J_{\pi^*}\left(\left[\begin{array}{c}s_1(k+1)\\ s_2(k) + \alpha \Delta_{S2}\end{array}\right]\right) \\
& \ge w_1^*(k) +|E_2|- \Delta_2 + \Delta_{S2} + J_{\pi^*}\left(\left[\begin{array}{c}s_1(k+1)\\ s_2(k)\end{array}\right]\right) -\alpha^2 \Delta_{S2},
\end{align*}}
where the second inequality follows from Proposition~\ref{prop2}. Since $(1-\alpha^2)\Delta_2 \ge 0$, the optimal $\Delta_{S2}$ is given by $\Delta_{S2} = \max\{\Delta_2 - |E_2(k)|, 0\}$ for $\Delta_2 - \Delta_{S2}\le |E_2(t)|$, which corresponds to using all of the transferred energy to cancel out the deficit first before charging the storage at BS 2.

Next, since all of the transferred energy $\Delta_2$, is used to compensate for the deficit first, if $\Delta_2 \ge |E_2(k)|$, then the Proposition is proven. If $\Delta_2 = \beta E_1(k)$, the Proposition is also proven since all of the excess energy at BS 1 is transferred to BS 2. Hence, it remains to consider the case where $\Delta_2 < \min\{\beta E_1(k), |E_2(k)|\}$. Here, a part of $E_1(k)$, $E_1(k) - \Delta_2/\beta$, is sent to storage at BS 1 instead of being transferred to BS 2 to compensate for $E_2(t)$. We can assume without loss of generality that there is no discharging of storage 1 since charging occurs. Let $\Delta =\min\{\beta E_1(k), |E_2(k)|\} - \Delta_2$. The deficit of $\beta \Delta$ at BS 2 has to be compensated for by other means, either through conventional power generation or through energy drawn from storage at BS 2. We consider the two cases separately



Case 1: Energy drawn from storage 2. Then, in this case, we show that
\begin{align*}
J_{\pi^*}\left(\left[\begin{array}{c}s_1(k) \\ s_2(k)\end{array}\right]\right)  \le J_{\pi^*}\left(\left[\begin{array}{c}s_1(k) + \alpha \Delta \\ s_2(k) - \beta\Delta/\alpha\end{array}\right]\right). 
\end{align*}
The proof follows similar arguments for Proposition~\ref{prop3} (see Appendix~\ref{pf:storage}). Consider a system at state $s'(k) = [s_1(k) + \alpha \Delta,  s_2(k) - \beta\Delta/\alpha]^T$. Let $\pi^*(s'(k))$ be the optimal policy corresponding to this state. Consider now a system at state $s(k) = [s_1(k),  s_2(k)]^T$. Let $\pi(s(k)) = \pi^*$ except in the following cases:
\begin{enumerate}
\item Overcharging at BS 2: $\alpha c_2^*(t) > S_{\rm max} - s_2(t) $. In this case, we set $\hat{c}_2(t) = (S_{\rm max} - s_2(t))/\alpha$ to charge storage 2 and transfer $\min\{c^*_2(t) - \hat{c}_2(t), (s'_1(t) - s_1(t))/(\alpha\beta)\}$ to charge storage 1.
\item Deficit at BS 2: $d_1^*(t) > s_1(t)$. In this case, the deficit is given by $\alpha (d^*_1(t) -s_1(t))$. We will compensate for this deficit by transferring energy from storage 2 (at BS 2) to BS 1. The total amount that needs to be transfer out of storage 1 is $\alpha(d_1^*(t) - s_1(t))/(\alpha\beta)= (d_1^*(t) - s_1(t))/\beta$.
\end{enumerate}

Similar to Claim~\ref{clm:store}, for evolution of state $s(t)$ under policy $\pi$ and evolution of $s'(t)$ under $\pi^*$, and the same energy profiles, we have for $t \ge k$, 
\begin{align}
s_1'(t) &\ge s_1(t),\label{eq:t1}\\
s_2(t) &\ge s_2'(t), \label{eq:t2}\\
s_1'(t) - s_1(t) &\le \beta(s_2(t) - s_2'(t)).\label{eq:t3} 
\end{align}
As the proof for these inequalities follow the same arguments as those found in Claim~\ref{clm:store} in Appendix~\ref{pf:storage}, we omit the proof here. We note only that the condition $\beta > \alpha$ is required for the inequalities to hold at $t = k$. At $t = k$, $\beta (s_2(t) - s_2'(t)) = \beta^2 \Delta/\alpha$, while $s_1'(t) -s_1(t) = \alpha\Delta$. Hence, if $\beta > \alpha$, $\beta^2 \Delta/\alpha > \alpha \Delta$.

When inequalities~\eqref{eq:t1} to~\eqref{eq:t3} are satisfied, the policy $\pi$ starting at state $s(k)$ does not incur more energy cost than the optimal policy $\pi^*$ for state $s'(k)$. Hence, we have
\begin{align*}
J_{\pi^*}\left(\left[\begin{array}{c}s_1(k) \\ s_2(k)\end{array}\right]\right)  &\le J_{\pi}\left(\left[\begin{array}{c}s_1(k) \\ s_2(k)\end{array}\right]\right)\\
& \le J_{\pi^*}\left(\left[\begin{array}{c}s_1(k) + \alpha \Delta \\ s_2(k) - \beta\Delta/\alpha\end{array}\right]\right), 
\end{align*}
which implies that the optimal $\Delta =0$.



Case 2: Increase in conventional energy. In this case, we incur an additional loss of $\beta \Delta$. Then, we have
\begin{align*}
J_{\pi^*}\left(\left[\begin{array}{c}s_1(k) \\ s_2(k)\end{array}\right]\right) &\le J_{\pi^*}\left(\left[\begin{array}{c}s_1(k) + \alpha \Delta \\ s_2(k)\end{array}\right]\right) - \beta \Delta\\
&\le -\beta \Delta + \alpha^2 \Delta + J_{\pi^*}\left(\left[\begin{array}{c}s_1(k) \\ s_2(k)\end{array}\right]\right),
\end{align*}
where the last line follows from Proposition~\ref{prop2}. Since $\beta > \alpha \ge \alpha^2$, we have $\Delta \le 0$, which implies that the optimal $\Delta =0$.
\end{proof}
\section{Proof of Proposition~\ref{prop5}} \label{append:prop5}

\begin{proof} At time $t = N$, it is clear that the greedy policy minimizes the amount of energy drawn from the grid. It remains to show for $t < N$ that the greedy policy, $\pi_g$, minimizes the cost-to-go function. That is, $J_{\pi_g} = J_{\pi^*}$. We will do so using a backward induction argument. Assume that at time $t = k$, we follow the greedy policy and then revert back to the optimal policy at time $k+1$. We show that this one-step greedy approach is also an optimal policy. Since the greedy policy is optimal at time $t = N$, induction on $t$ then shows that the greedy policy is optimal for all $t$. Let $\pi_{og}$ denote the one step greedy policy. Then,
\begin{align*}
J_{\pi_{og}}\left(\left[\begin{array}{c}s_1(k) \\ s_2(k)\end{array}\right]\right) = \mathbf{1}^T w_g(k) + J_{\pi^*}\left(\left[\begin{array}{c}s_{g,1}(k+1) \\ s_{g,2}(k+1)\end{array}\right]\right),
\end{align*}
where $w_g(t)$ represents the energy drawn from the main grid under the greedy policy at $t = k$, and $s_g(k+1)$ represents the storage states at time $k+1$ after applying the greedy policy at time $k$. Note that the greedy policy is designed to minimize the conventional energy drawn from the grid at time $k$. Hence, even under the optimal policy, $\pi^*$, we have $\mathbf{1}^T w_g(t) \le \mathbf{1}^T w^*(t)$. At each time $t = k$, there are two cases to consider.

Case 1: $E_2(k) \ge 0$. In this case, $\mathbf{1}^T w_g(k) = 0$ and each BS charges its own storage first before charging the storage of the other BS. In this case, it is straightforward to see from Case 1 of Proposition~\ref{prop3} and Corollary~\ref{coro1} that $J_{\pi_{og}} = J_{\pi^*}$. 

Case 2.1: $E_2(k) < 0$ and $E_1(k)\ge |E_2(k)|/\beta$. In this case, from Proposition~\ref{prop4} and the condition $\beta > \alpha$, energy transfer from BS 1 to BS 2 is an optimal strategy. Since $E_1(t)\ge |E_2(t)|/\beta$, we can reduce the problem back to the first case with $E_1'(t) = E_1(t) - |E_2(t)|/\beta$ and $E_2'(t) =0$, where the greedy strategy is optimal. 


Case 2.2: $E_2(k) < 0$ and $E_1(k)< |E_2(k)|/\beta$. From Proposition~\ref{prop4}, energy transfer at time $k$ is still optimal. Hence, we have 
\begin{align*}
J_{\pi_{og}}(s(k), E_1(k), E_2(k)) = J_{\pi_{og}}(s(k), 0, E_2(k)+ E_1(k)/\alpha), \\
J_{\pi^*}(s(k), E_1(k), E_2(k)) = J_{\pi^*}(s(k), 0, E_2(k)+ E_1(k)/\alpha). 
\end{align*}

It remains to show that $J_{\pi_{og}}(s(k), 0, E_2(k)+ E_1(k)/\alpha) \le J_{\pi^*}(s(k), 0, E_2(k)+ E_1(k)/\alpha)$. Let $\pi'$ be any other policy. Since $\pi_{og}$ minimizes the conventional energy required at time $k$, $\mathbf{1}^T(w'(k) - w_g(k))$. Note from Corollary~\ref{coro1} that conventional energy is not used to charge storages 1 or 2. Since the deficit occurs at BS 2, we can assume without loss of generality that $w_1'(k) = w_{g}(k) = 0$ and $\Delta = w_2'(k) - w_{g,2}(k)$. Let $\Delta_1$ and $\Delta_2$ be the change in storage levels, with respect to the greedy policy, due to the policy $\pi'$. We have
\begin{align*}
J_{\pi'}\left(\left[\begin{array}{c}s_1(k) \\ s_2(k)\end{array}\right]\right) \ge w'_2(k) + J_{\pi^*}\left(\left[\begin{array}{c}s_{g,1}(k+1) + \Delta_1 \\ s_{g,2}(k+1) + \Delta_2\end{array}\right]\right).
\end{align*}
Now, $\Delta_2 \ge 0$ since $\pi_{og}$ first uses storage 2 to compensate for any deficit before using storage 1 and conventional energy. Further, from Corollary~\ref{coro1}, the conventional energy is not used to charge the storages. Therefore, any change in storage levels is due to the additional conventional energy, $\Delta$, being used to compensate for the deficit instead of storage discharges. Hence, $\alpha \Delta_2 + \alpha \beta \Delta_1 = \Delta$. Now, if $\Delta_1 < 0$, we have{\allowdisplaybreaks
\begin{align*}
J_{\pi'}\left(\left[\begin{array}{c}s_1(k) \\ s_2(k)\end{array}\right]\right) &\ge w'_2(k) + J_{\pi^*}\left(\left[\begin{array}{c}s_{g,1}(k+1) + \Delta_1 \\ s_{g,2}(k+1) + \Delta_2\end{array}\right]\right) \\
& \stackrel{(a)}{\ge}w'_2(k) + J_{\pi^*}\left(\left[\begin{array}{c}s_{g,1}(k+1) + \Delta_1+ \beta\Delta_2 \\ s_{g,2}(k+1) \end{array}\right]\right) \\
& \stackrel{(b)}{\ge}w'_2(k)-w_{g,2}(k)+ w_{g,2}(k) + J_{\pi^*}\left(\left[\begin{array}{c}s_{g,1}(k+1)  \\ s_{g,2}(k+1) \end{array}\right]\right)-  \alpha\Delta_1- \alpha\beta\Delta_2 \\
& = w_{g,2}(k) + J_{\pi^*}\left(\left[\begin{array}{c}s_{g,1}(k+1)  \\ s_{g,2}(k+1) \end{array}\right]\right)+ \Delta -(\alpha\Delta_1+ \alpha\beta\Delta_2) \\
& = J_{\pi_{og}}\left(\left[\begin{array}{c}s_{1}(k)  \\ s_{2}(k) \end{array}\right]\right).
\end{align*}}
$(a)$ follows from Case 2 of Proposition~\ref{prop3}, $\Delta_1 \le 0$ and $\Delta_2 + \beta \Delta_1 = \Delta/\alpha \ge 0$. $(b)$ follows from Proposition~\ref{prop2}. Now, for the case when $\Delta_1 \ge 0$, we have
\begin{align*}
J_{\pi'}\left(\left[\begin{array}{c}s_1(k) \\ s_2(k)\end{array}\right]\right) &\ge w'_2(k) + J_{\pi^*}\left(\left[\begin{array}{c}s_{g,1}(k+1) + \Delta_1 \\ s_{g,2}(k+1) + \Delta_2\end{array}\right]\right) \\
& \stackrel{(a)}{\ge}w'_2(k)-w_{g,2}(k)+ w_{g,2}(k) + J_{\pi^*}\left(\left[\begin{array}{c}s_{g,1}(k+1) \\ s_{g,2}(k+1) + \Delta_2 \end{array}\right]\right) - \alpha \Delta_1\\
& \stackrel{(b)}{\ge}\Delta+ w_{g,2}(k) + J_{\pi^*}\left(\left[\begin{array}{c}s_{g,1}(k+1)  \\ s_{g,2}(k+1) \end{array}\right]\right)-  \alpha\Delta_1- \alpha\beta\Delta_2 \\
& = w_{g,2}(k) + J_{\pi^*}\left(\left[\begin{array}{c}s_{g,1}(k+1)  \\ s_{g,2}(k+1) \end{array}\right]\right)+ \Delta -(\alpha\Delta_1+ \alpha\beta\Delta_2) \\
& = J_{\pi_{og}}\left(\left[\begin{array}{c}s_{1}(k)  \\ s_{2}(k) \end{array}\right]\right).
\end{align*}
$(a)$ follows from Proposition~\ref{prop2}. $(b)$ follows from claim~\ref{clm:bound}. 
\end{proof}

\section{Proof of Proposition~\ref{prop6}} \label{append:prop6}
\begin{proof}
We first note that the modified greedy policy is optimal at time $t = N$. It now remains to show by backward induction that the policy is optimal for all $t$. As with Proposition~\ref{prop5}, let $\pi_{og}$ denote the one step modified greedy policy in which the Case 2A of Proposition~\ref{prop5} is implemented at time $t = k$ and then the optimal policy is implemented for $t \ge k+1$. We now show that any other policy, $\pi'$, will incur a cost that is at least as large as the cost incurred by $\pi_{og}$. 

Observe that for a policy $\pi'$ to be different from $\pi_{og}$ at time $t = k$, the energy transferred to BS 2, $x_{12}(k)$, must be less than $\min\{|E_2(k)|/\beta, E_1(k)\}$. That is, a fraction of the excess energy is put into storage instead of being sent to BS 2. Since $x_{12}(k)< \min\{|E_2(k)|/\beta, E_1(k)\}$, the deficit $\Delta = |E_2(k)| - \beta x_{12}(k)$ must be compensated for by other means, through discharging of storage at BS 2 and conventional energy. Let $\Delta_{d2}$ be the additional discharge at storage 2 and $\Delta_2$ be the additional conventional energy (with respect to the modified greedy policy) used to compensate for the deficit. Let $\Delta_{c1} = E_1(k) - x_{12}(k)$ be the additional energy sent to storage 1 (with respect to the modified greedy policy and $\alpha \Delta_{c1} \le S_{\rm max} - s_{g,1}(k+1)$), such that $\Delta_2 + \alpha \Delta_{d2} = \Delta$ and $\beta\Delta_{c1} = \Delta$. 

Case 1: We first consider the case where $\Delta/\alpha \le s_{g,2}(k+1)$. 
{\allowdisplaybreaks
\begin{align*}
J_{\pi'}\left(\left[\begin{array}{c}s_1(k) \\ s_2(k)\end{array}\right]\right) &\ge w_{g,2}(k) + \Delta_{2} + J_{\pi^*}\left(\left[\begin{array}{c}s_{g,1}(k+1) + \alpha\Delta_{c1} \\ s_{g,2}(k+1) - \Delta_{d2}\end{array}\right]\right) \\
&\stackrel{(a)}{\ge} w_{g,2}(k)  + J_{\pi^*}\left(\left[\begin{array}{c}s_{g,1}(k+1) + \alpha\Delta_{c1} \\ s_{g,2}(k+1)- \Delta_{d2} - \Delta_2/\alpha \end{array}\right]\right)  \\
& = w_{g,2}(k)  + J_{\pi^*}\left(\left[\begin{array}{c}s_{g,1}(k+1) + \alpha\Delta_{c1} \\ s_{g,2}(k+1)- \Delta/\alpha\end{array}\right]\right)  \\
& = w_{g,2}(k)  + J_{\pi^*}\left(\left[\begin{array}{c}s_{g,1}(k+1) + \alpha\Delta_{c1} \\ s_{g,2}(k+1)- \beta\Delta_{c1}/\alpha\end{array}\right]\right) \\
& \stackrel{(b)}{\ge}  w_{g,2}(k)  + J_{\pi^*}\left(\left[\begin{array}{c}s_{g,1}(k+1) \\ s_{g,2}(k+1)\end{array}\right]\right) \\
& = J_{\pi_{og}}\left(\left[\begin{array}{c}s_{g,1}(k) \\ s_{g,2}(k)\end{array}\right]\right).
\end{align*} }
$(a)$ follows from Proposition~\ref{prop2}. In $(b)$, we use the following claim. 
\begin{claim}\label{clm:2}
With the assumptions as given in Proposition~\ref{prop6}, for any $k+1 \le t\le N$ and $\Delta \ge 0$ such that $\Delta \le \min\{(S_{\rm max} - s_1(t))/\alpha, \alpha s_2(t)/\beta\}$, we have 
\begin{align*}
J_{\pi^*}\left(\left[\begin{array}{c}s_1(k+1) \\ s_2(k+1)\end{array}\right]\right) \le  J_{\pi^*}\left(\left[\begin{array}{c}s_{1}(k+1) + \alpha\Delta \\ s_{2}(k+1)- \beta\Delta/\alpha\end{array}\right]\right).
\end{align*}
\end{claim}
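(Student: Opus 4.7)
I would prove the claim by a coupling argument. Let $\pi^{*}$ denote an optimal policy starting from the perturbed state $s'(k+1):=(s_{1}(k+1)+\alpha\Delta,\; s_{2}(k+1)-\beta\Delta/\alpha)$, and construct a policy $\pi$ starting from $s(k+1)=(s_{1}(k+1),s_{2}(k+1))$ whose cost-to-go is no larger than $\pi^{*}$'s slot-by-slot, establishing $J_{\pi^{*}}(s(k+1))\le J_{\pi}(s(k+1))\le J_{\pi^{*}}(s'(k+1))$. The two hypotheses on $\Delta$ are exactly what is needed for $s'(k+1)$ to be a feasible storage state.

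Before building $\pi$, I would collect several without-loss-of-generality simplifications of $\pi^{*}$ that exploit $E_{1}(t)\ge 0$ and $E_{2}(t)\le 0$ for all $t\ge k+1$, namely $w_{1}^{*}(t)=0$, $x_{21}^{*}(t)=0$, $c_{2}^{*}(t)=0$, and $c_{1}^{*}(t)d_{1}^{*}(t)=0$. Each follows by an exchange argument in the spirit of Propositions~\ref{prop2}--\ref{prop4}: BS~1 is always in surplus, so drawing from the grid or receiving a back-transfer is strictly wasteful, and BS~2's storage can only be filled by a lossy transfer from BS~1, so any charging at BS~2 loses more round-trip efficiency ($\alpha^{2}\beta$) than it recovers in later deficit reduction ($\beta$ if used directly).

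The mimic $\pi$ copies every action of $\pi^{*}$, with one modification: whenever $d_{1}^{*}(t)>s_{1}(t)$, set $d_{1}(t)=s_{1}(t)$, reduce the transfer to $x_{12}(t)=x_{12}^{*}(t)-\alpha\epsilon_{1}(t)$ with $\epsilon_{1}(t):=d_{1}^{*}(t)-s_{1}(t)$, and discharge an additional $\beta\epsilon_{1}(t)$ at BS~2, so $d_{2}(t)=d_{2}^{*}(t)+\beta\epsilon_{1}(t)$. Using the tight BS~1 balance $x_{12}^{*}(t)=E_{1}(t)+\alpha d_{1}^{*}(t)$ (valid when $d_{1}^{*}(t)>0$ after the simplifications above), one checks that $x_{12}(t)=E_{1}(t)+\alpha d_{1}(t)\ge 0$ and that BS~2's balance is preserved with no change in conventional draw, so $\pi$ matches $\pi^{*}$'s per-slot cost exactly.

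The core of the argument is then to verify that the extra BS~2 discharge is always feasible, i.e., that $\beta\epsilon_{1}(t)\le s_{2}(t)-d_{2}^{*}(t)$. I would propagate the invariants $\delta_{1}(t):=s_{1}'(t)-s_{1}(t)\ge 0$, $\delta_{2}(t):=s_{2}(t)-s_{2}'(t)\ge 0$, and $\delta_{2}(t)\ge\beta\,\delta_{1}(t)$. A short computation shows that under the recursions $\delta_{1}(t+1)=\delta_{1}(t)-\epsilon_{1}(t)$ and $\delta_{2}(t+1)=\delta_{2}(t)-\beta\epsilon_{1}(t)$ the difference $\delta_{2}(t)-\beta\delta_{1}(t)$ is in fact conserved, and its initial value $\beta\Delta(1-\alpha^{2})/\alpha$ is nonnegative. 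Combined with $\epsilon_{1}(t)\le\delta_{1}(t)$ and $s_{2}(t)-d_{2}^{*}(t)\ge\delta_{2}(t)$, this yields the required feasibility. The step I expect to be the main obstacle is establishing the WLOG reductions cleanly, especially $c_{2}^{*}(t)=0$: without it the inductive step must absorb intra-slot energy flowing into BS~2's storage, which couples the two $\delta$'s and obscures the invariant.
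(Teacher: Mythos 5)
Your coupling skeleton (mimic the optimal policy of the perturbed state $(s_1+\alpha\Delta,\,s_2-\beta\Delta/\alpha)$ from the unperturbed state, cover any BS-1 discharge shortfall $\epsilon_1(t)$ by an extra discharge $\beta\epsilon_1(t)$ at BS 2, and track $\delta_1,\delta_2$ with $\delta_2-\beta\delta_1$ conserved and initially $\beta\Delta(1-\alpha^2)/\alpha\ge 0$) is exactly the mechanism the paper uses for its "deficit at BS 1'' adjustment, and that part of your argument is sound. The genuine gap is the reduction you yourself flag as the main obstacle: the claim that $c_2^*(t)=0$ is without loss of generality is false under the hypotheses of Proposition~\ref{prop6}. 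Your efficiency comparison ($\alpha^2\beta$ round-trip versus $\beta$ "used directly'') only applies when the transferred energy could instead cover a \emph{current} deficit at BS 2; it says nothing about surplus at BS 1 that exceeds the current deficit and cannot be stored locally because BS 1's storage is (or becomes) full. In that situation charging BS 2's storage strictly reduces future grid draw (e.g.\ $E_1$ large and $s_1$ at capacity at some slot, with BS 2 deficits only in later slots), so an optimal policy may well have $c_2^*(t)>0$; indeed Case 1 of the greedy algorithm explicitly transfers for storage when the local battery is full, and Proposition~\ref{prop3} only says local storage is preferred \emph{when local capacity remains}.

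Without that reduction your construction breaks in precisely the way you anticipate: the mimicking system starts with \emph{more} energy in BS 2's battery, so when $\pi^*$ charges BS 2 the mimic can hit the cap $S_{\rm max}$, the recursions $\delta_1(t+1)=\delta_1(t)-\epsilon_1(t)$, $\delta_2(t+1)=\delta_2(t)-\beta\epsilon_1(t)$ no longer describe the dynamics, and the invariant $\delta_2\ge\beta\delta_1$ can be destroyed (e.g.\ both systems saturate BS 2 while $\delta_1>0$), after which a later shortfall $\epsilon_1(t)$ need not be coverable from BS 2. The paper's proof does the extra work here rather than assuming it away: when the mimic would overcharge BS 2 it withholds the un-storable part of the transfer, $(c_2^*-c_2')/\beta$, and redirects it into BS 1's battery of the mimic (with a further cap on $c_1$), and correspondingly its invariant is not a single conserved quantity but the conditional pair $s_2'-s_2^*\ge\beta(s_1^*-s_1')$ when $s_1^*\ge s_1'$, and simply $s_2'\ge s_2^*$ otherwise. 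To repair your proof you would need to add this overflow-rerouting case (or an equivalent) and weaken your invariant accordingly; the $w_1^*=0$, $x_{21}^*=0$, $c_1^*d_1^*=0$ reductions are unobjectionable but do not rescue the $c_2^*=0$ step.
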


\textit{Proof of claim}: Let $\pi^*$ be the optimal policy for the system starting at state $[s^*_1(k+1), s^*_2(k+1)]$, where $s^*_1(k+1) = s_{1}(k+1) + \alpha\Delta$ and $s^*_2(k+1) = s_{g,2}(k+1) - \beta\Delta/\alpha$. Let $\pi'$ be a policy for the system starting at state $[s'_1(k+1), s'_2(k+1)]=[s_{1}(k+1), s_{2}(k+1)]$, , such that $\pi' = \pi^*$ except when 
\begin{itemize}
\item $\alpha c_2^*(t) \ge S_{\rm max} - s_2'(t)$: Set $c'_2(t) = (S_{\rm max} - s_2'(t))/\alpha$. Note that since we assume that $E_2(t) \le 0$ for all $t$, and from Corollary~\ref{coro1}, conventional energy is not used to charge storages, any charging of storage at BS 2 must come from the excess energy at BS 1. Hence, we set $x_{12}'(t) = x_{12}^*(t) - (c_2^*(t) - c_2'(t))/\beta$. Finally, we set $c_1'(t) = \min\{c_1^*(t) + (c_2^*(t) - c_2'(t))/\beta, (S_{\rm max} -s_1'(t))/\alpha\}$.
\item  $\alpha d_1^*(t) \ge s_1'(t)$: In this case, note that since $E_1(t) \ge 0$, any discharge from storage at BS 1 is only used at BS 2. We set $d_1'(t) = s_1'(t)$, $x_{12}'(t) = x_{12}^*(t) - \alpha (d_1^*(t) - d_1'(t))$ and $d_2'(t) = d_2^*(t) +\beta(d_1^*(t) - d_1'(t))$. 
\item $c_1^*(t) \ge (S_{\rm max} - s_1''(t))/\alpha$: Set $c_1'(t) = (S_{\rm max} - s_1''(t))/\alpha$. 
\end{itemize}
Using the assumptions $E_1(t) \ge 0$ and $E_2(t) \le 0$, it is not difficult to see that the following inequality regarding the states hold for all $t \ge k+1$.
\begin{align*}
s_2'(t) - s_2^*(t) \ge \beta (s_1^*(t) - s_1'(t))
\end{align*}
if $s_1^*(t)\ge s_1'(t)$ and 
\begin{align*}
s_2'(t) \ge s_2^*(t)
\end{align*}
if $s_1'(t) \ge s_1^*(t)$. These inequalities imply that no additional conventional energy is required when we use policy $\pi'$ for a system starting at state $[s'_1(k+1), s'_2(k+1)]$, as compared to a system under policy $\pi^*$ and starting at state $[s^*_1(k+1), s^*_2(k+1)]$. Hence,
\begin{align*}
J_{\pi^*}\left(\left[\begin{array}{c}s_{1}(k+1) \\ s_{2}(k+1)\end{array}\right]\right) & \le J_{\pi'}\left(\left[\begin{array}{c}s_{1}(k+1) \\ s_{2}(k+1)\end{array}\right]\right) \\
& = J_{\pi'}\left(\left[\begin{array}{c}s_1'(k+1) \\ s_2'(k+1)\end{array}\right]\right) \\
& \le J_{\pi^*}\left(\left[\begin{array}{c}s_{1}(k+1) + \alpha\Delta \\ s_{2}(k+1)- \beta\Delta/\alpha\end{array}\right]\right),
\end{align*}
which completes the proof of Claim~\ref{clm:2}.

Case 2: We now consider the case where $\Delta/\alpha > s_{g,2}(k+1)$. This case can actually be treated as an extension of the previous case (Case 1). Let $\Delta_{c1}' + \Delta_{c2}'' = \Delta_{c1}$ and $\Delta' + \Delta'' = \Delta$ such that $\Delta' = \alpha s_{g,2}(k+1)$ and $\beta \Delta_{c1}' = \Delta'$.{\allowdisplaybreaks
\begin{align*}
J_{\pi'}\left(\left[\begin{array}{c}s_1(k) \\ s_2(k)\end{array}\right]\right) &\ge w_{g,2}(k) + \Delta_{2} + J_{\pi^*}\left(\left[\begin{array}{c}s_{g,1}(k+1) + \alpha\Delta_{c1} \\ s_{g,2}(k+1) - \Delta_{d2}\end{array}\right]\right) \\
&\stackrel{(a)}{\ge} w_{g,2}(k)  +\Delta'' +  J_{\pi^*}\left(\left[\begin{array}{c}s_{g,1}(k+1) + \alpha\Delta_{c1}' + \alpha \Delta_{c1}'' \\ s_{g,2}(k+1)-\Delta'/\alpha \end{array}\right]\right)  \\
&= w_{g,2}(k)  +\Delta'' +  J_{\pi^*}\left(\left[\begin{array}{c}s_{g,1}(k+1) + \alpha\Delta_{c1}' + \alpha \Delta_{c1}'' \\ s_{g,2}(k+1)-\beta\Delta_{c1}'/\alpha \end{array}\right]\right)  \\
&\stackrel{(b)}{\ge} w_{g,2}(k)  +\Delta'' +  J_{\pi^*}\left(\left[\begin{array}{c}s_{g,1}(k+1) + \alpha \Delta_{c1}'' \\ s_{g,2}(k+1) \end{array}\right]\right)  \\
&\stackrel{(c)}{\ge} w_{g,2}(k)  +\Delta'' +  J_{\pi^*}\left(\left[\begin{array}{c}s_{g,1}(k+1)  \\ s_{g,2}(k+1) \end{array}\right]\right) -  \alpha^2\beta \Delta_{c1}''  \\
&{=} w_{g,2}(k)  +\beta (1- \alpha^2)\Delta_{c1}'' +  J_{\pi^*}\left(\left[\begin{array}{c}s_{g,1}(k+1)  \\ s_{g,2}(k+1) \end{array}\right]\right)   \\
& \ge J_{\pi_{og}}\left(\left[\begin{array}{c}s_{g,1}(k) \\ s_{g,2}(k)\end{array}\right]\right).
\end{align*} }
$(a)$ follows from Proposition~\ref{prop2}. $(b)$ follows from claim~\ref{clm:2} and finally, $(c)$ follows from claim~\ref{clm:bound}. 

Combining the two cases then completes the proof of this Proposition. 
\end{proof} 
\bibliographystyle{IEEEtran}
\bibliography{sg}

\begin{thebibliography}{10}
\providecommand{\url}[1]{#1}
\csname url@samestyle\endcsname
\providecommand{\newblock}{\relax}
\providecommand{\bibinfo}[2]{#2}
\providecommand{\BIBentrySTDinterwordspacing}{\spaceskip=0pt\relax}
\providecommand{\BIBentryALTinterwordstretchfactor}{4}
\providecommand{\BIBentryALTinterwordspacing}{\spaceskip=\fontdimen2\font plus
\BIBentryALTinterwordstretchfactor\fontdimen3\font minus
  \fontdimen4\font\relax}
\providecommand{\BIBforeignlanguage}[2]{{%
\expandafter\ifx\csname l@#1\endcsname\relax
\typeout{** WARNING: IEEEtran.bst: No hyphenation pattern has been}%
\typeout{** loaded for the language `#1'. Using the pattern for}%
\typeout{** the default language instead.}%
\else
\language=\csname l@#1\endcsname
\fi
#2}}
\providecommand{\BIBdecl}{\relax}
\BIBdecl

\bibitem{Ericsson}
\BIBentryALTinterwordspacing
M.~McDermott, ``100+ solar-powered ericsson cell phone base stations coming to
  africa,'' 2009. [Online]. Available:
  \url{http://www.treehugger.com/clean-technology/100-solar-powered-ericsson-cell-phone-base-stations-coming-to-africa.html}
\BIBentrySTDinterwordspacing

\bibitem{Huawei}
\BIBentryALTinterwordspacing
Y.~Chan, ``China's huawei to supply solar-powered base stations to
  bangladesh,'' 2009. [Online]. Available:
  \url{http://www.businessgreen.com/bg/news/1802444/chinas-huawei-supply-solar-powered-base-stations-bangladesh}
\BIBentrySTDinterwordspacing

\bibitem{Hasan2011}
Z.~Hasan, H.~Boostanimehr, and V.~K. Bhargava, ``Green cellular networks: a
  survey, some research issues and challenges,'' \emph{IEEE Communications
  Surveys and Tutorials}, vol.~13, pp. 524--540, Fourth Quarter 2011.

\bibitem{Kanoria2011}
Y.~Kanoria, A.~Montanari, D.~Tse, and B.~Zhang, ``Distributed storage for
  intermittent energy sources: Control design and performance limits,'' in
  \emph{49th Allerton Conference on Communication, Control and Computing},
  Monticello, Illinois, USA, September 2011.

\bibitem{Su2011}
H.-I. Su and A.~El~Gamal, ``Modeling and analysis of the role of fast- response
  energy storage in the smart grid,'' in \emph{49th Allerton Conference on
  Communication, Control and Computing}, Monticello, Illinois, USA, September
  2011.

\bibitem{Gurakan--Ozel--Yang--Ulukus2012a}
B.~Gurakan, O.~Ozel, J.~Yang, and S.~Ulukus, ``Energy cooperation in energy
  harvesting wireless communications,'' in \emph{IEEE International Symposium
  on Information Theory}, Cambridge, MA, USA, July 2012.

\bibitem{CHuang2012}
C.~Huang, R.~Zhang, and S.~Cui, ``Throughput maximization for the gaussian
  relay channel with energy harvesting constraints,'' \emph{IEEE J. Sel. Areas
  Commun.}, 2012, to appear.

\bibitem{Zhang--Gladisch--Pickavet--Tao--Mohr2010}
H.~Zhang, A.~Gladisch, M.~Pickavet, Z.~Tao, and W.~Mohr, ``Energy efficiency in
  communications,'' \emph{IEEE Communications Magazine}, vol.~48, no.~11,
  November 2010.

\bibitem{Li--Xu--Swami--Himayat--Fettweis2011}
G.~Li, S.~Xu, A.~Swami, N.~Himayat, and G.~Fettweis, ``Guest editorial,
  energy-efficient wireless communications,'' \emph{IEEE J. Selected Areas in
  Communications}, vol.~29, no.~8, September 2011.

\bibitem{han--leung--niu--ristaniemi--seet2011}
P.~H.~J. Chong, C.~Leung, Z.~Niu, T.~Ristaniemi, and B.-C. Seet, ``Guest
  editorial, technologies for green radio communication networks,'' \emph{IEEE
  Wireless Communications}, vol.~18, no.~5, October 2011.

\bibitem{Capone--Kilper--Niu2012}
A.~Capone, D.~Kilper, and Z.~Niu, ``Editorial for computer networks special
  issue on green communication networks,'' \emph{Elsevier Computer Networks},
  vol.~56, no.~10, July 2012.

\bibitem{Hossain--Bhargava--Fettweis2012}
E.~Hossain, V.~K. Bhargava, and G.~P. Fettweis, \emph{Green Radio Communication
  Networks}.\hskip 1em plus 0.5em minus 0.4em\relax Cambridge University Press,
  2012.

\bibitem{Gesbert2010}
D.~Gesbert, S.~Hanly, H.~Huang, S.~Shamai, O.~Simeone, and W.~Yu, ``Multi-cell
  mimo cooperative networks: A new look at interference,'' \emph{IEEE J. Sel.
  Areas Commun.}, vol.~28, no.~9, pp. 1380--1408, December 2010.

\bibitem{Ozel2011}
O.~Ozel, K.~Tutuncuoglu, J.~Yang, S.~Ulukus, and A.~Yener, ``Transmission with
  energy harvesting nodes in fading wireless channels: Optimal policies,''
  \emph{IEEE J. Sel. Areas Commun.}, vol.~29, no.~8, pp. 1732--1743, September
  2011.

\bibitem{Ho2012}
C.~K. Ho and R.~Zhang, ``Optimal energy allocation for wireless communications
  with energy harvesting constraints,'' \emph{IEEE Transactions on Signal
  Processing}, vol.~60, no.~9, pp. 4808--4818, September 2012.

\end{thebibliography}
\end{document}